\newcommand{\X}{X}
\newcommand{\x}{x}
\newcommand{\Z}{Z}
\newcommand{\T}{T}
\newcommand{\alphX}{\mathcal{\X}}
\newcommand{\indx}{i}
\newcommand{\Xbbf}{{\mathbb{\mathbf{X}}}}
\newcommand{\encod}{\mathcal{E}}
\newcommand{\decod}{\mathcal{D}}
\newcommand{\ex}{{\rm e}}
\newcommand{\Ac}{\mathcal{A}}
\newcommand{\Bc}{\mathcal{B}}
\newcommand{\Cc}{\mathcal{C}}
\newcommand{\Ec}{\mathcal{E}}
\newcommand{\Fc}{\mathcal{F}}
\newcommand{\Nc}{\mathcal{N}}
\newcommand{\Pc}{\mathcal{P}}
\newcommand{\Qc}{\mathcal{Q}}
\newcommand{\Xc}{\mathcal{X}}
\newcommand{\Yc}{\mathcal{Y}}
\newcommand{\Xv}{{\bf X}}
\newcommand{\xv}{{\bf x}}
\newcommand{\yv}{{\bf y}}
\newcommand{\zv}{{\bf z}}
\newcommand{\Sh}{{\hat{S}}}
\newcommand{\Xh}{{\hat{X}}}
\newcommand{\xh}{{\hat{x}}}
\newcommand{\Xt}{{\tilde{X}}}
\newcommand{\xt}{{\tilde{x}}}
\def\a{\alpha}
\def\d{\delta}
\def\e{\epsilon}
\DeclareMathOperator\E{E}
\let\P\relax
\DeclareMathOperator\P{P}
\newcommand{\Bern}{\mathrm{Bern}}
\newcommand\ie{i.e.,\xspace}
\def\textiid{i.i.d.\@\xspace}
\newcommand\iid{\ifmmode\text{ i.i.d. } \else \textiid \fi}
\newcommand{\ind}{\mathbbmss{1}}
\DeclareMathOperator*{\argmin}{arg\,min}
\newtheorem{theorem}{Theorem}
\newtheorem{lemma}{Lemma}
\newtheorem{corollary}{Corollary}
\newtheorem{definition}{Defintion}
\theoremstyle{definition}
\newtheorem{remark}{Remark}
\begin{document}
\title{Compression-Based  Compressed Sensing}

\author{
    \IEEEauthorblockN{Farideh Ebrahim Rezagah, Shirin Jalali, Elza Erkip, H. Vincent Poor}}

\maketitle

\begin{abstract}

Modern compression algorithms exploit complex structures that are present in signals to describe them very efficiently. On the other hand, the field of compressed sensing is built upon the observation that ``structured'' signals can be recovered from their under-determined set of linear projections.   Currently, there is a large gap between the complexity of the structures studied in the area of compressed sensing and those employed by the state-of-the-art compression codes. Recent results in the literature on deterministic signals aim at bridging this gap    through devising compressed sensing decoders that employ compression codes.
This paper focuses on structured stochastic processes and studies  the application of rate-distortion codes to  compressed sensing of such signals.  The performance of the  formerly-proposed compressible signal pursuit (CSP) algorithm is studied in this stochastic setting. It is proved that in the very low distortion regime, as the blocklength grows to infinity, the CSP algorithm reliably and robustly  recovers $n$ instances of a stationary  process  from random linear projections as long as their count is slightly more than $n$ times the rate-distortion dimension (RDD) of the source. It is also shown that under some regularity conditions, the RDD of a stationary process is equal to its information dimension (ID).  This connection establishes the optimality of the CSP algorithm  at least for memoryless stationary sources, for which  the fundamental limits are known. Finally, it is shown that the CSP algorithm combined by a family of universal variable-length fixed-distortion compression codes yields a family of universal compressed sensing recovery algorithms. 

Keywords: Compressed Sensing, Lossy Compression, Universal Compression, Rate-Distortion Dimension, Information Dimension.

\end{abstract}
\section{Introduction}

Consider the standard  setup of a compressed sensing data acquisition system: a decoder observes a noisy linear projection of  the high-dimensional signal $\xv\in\mathds{R}^n$, \ie $\yv=A\xv+\zv$, where $A\in\mathds{R}^{m\times n}$, $m<n$, is the measurement matrix, and $\zv\in\mathds{R}^m$ denotes the measurement noise. The signal is assumed to be ``structured'', which typically means that it is sparse in some transform domain.
 The decoder is expected to recover the signal $\xv$ using a computationally efficient algorithm with as few number of   measurements, $m$, as possible. Such modern data acquisition problems, which can be described as  solving    under-determined systems   of linear equations,  arise in many different applications, including  magnetic resonance imaging (MRI), high resolution imaging, and radar.

 While sparsity of the desired signal is the main focus in the compressed sensing literature, started by the key works of  Donoho et al. ~\cite{Donoho:06} and Candes et al. \cite{CandesT:06,CandesR:06},  more recent compressed sesning recovery algorithms capture structures beyond sparsity, such as group sparsity, low-rankness, etc.~\cite{RichModelbasedCS, VeMaBl02, ReFaPa10, HeBa11, DoKaMe06,BakinThesis, eldar2010block, YuLi06, ji2009multi, MaAnYaBa11, stojnic2009block, stojnic2009reconstruction, stojnic2010ell, MeVaBu08, ChRePaWi10,  som2012compressive, donoho2013microlocal, CaLiMaWr09,  waters2011sparcs, ChSaPaWi11, duarte2011performance, blumensath2009sampling, mccoy2012sharp, studer2011stable, peyre2011group}. Although the studied structures in the literature and their extensions are present in many signals of interest, and yield  promising results, they are to a great extent confined to  basic models,
compared to more complex underlying structures  known to be present in such signals.  Employing such elaborate structures that are usually present in the signals can potentially lead to much more efficient compressed sensing systems that require significantly smaller numbers of measurements, for achieving  the same reconstruction quality.
%

In addition to compressed sensing, the structure of a signal plays  an important role in many  other fundamental problems in information theory, such as data compression, data prediction and denoising. Data  compression  is a well-studied topic in information theory,  initiated  by the Shannon's seminal work \cite{Shannon:48}. Compression algorithms employ the patterns  in a signal to render efficient digital representation of it. After decades of research, the types of structure employed by the state-of-the-art compression algorithms, especially for coding image, audio and video signals, are quite elaborate, and much more complicated than those studied in compressed sensing. 


Given the maturity and the efficiency of existing data compression algorithms,  one may wonder whether  data compression codes can be directly  employed to build compressed sensing recovery algorithms. The motivation for such an approach is that in order for  a good compression code to represent some process as efficiently as possible,  theoretically, it should employ all the structure that is present in it. Therefore, building a compressed sensing decoder based on an efficient data compression code, potentially might enable the decoder to exploit all the structure present in the data, and, thereby minimize the number of measurements. Also, another advantage of this approach would be devising a generic process for building compressed sensing recovery algorithms based on compression codes to simplify this task.  Hence, to  find the most efficient compressed sensing data recovery algorithm for a given source of data,  instead of studying and learning  some specific structure in the source model,  the already existing data compression codes can be used to  directly  design efficient compressed sensing decoders. 

The idea of utilizing compression codes in designing compressed sensing recovery algorithms was introduced in  \cite{JalaliM:13-isit} and \cite{JalaliM:15-acha}.  Consider $x^n \in \Qc$, where $\Qc$ represents  a compact subset of $\mathds{R}^n$. A compression code of rate $r$ for the set $\Qc$ is described by encoder and decoder mappings
\[
f_n:\Qc\to \{1,\ldots,2^{r}\}
\]
and
\[
g_n:  \{1,\ldots,2^{r}\}\to \mathds{R}^n,
\]
respectively. The distortion induced by this code is defined as
\[
\d\triangleq\sup_{x^n\in\Qc}\|\x^n-g_n(f_n(x^n))\|_2.
\]
This code defines a codebook $\Cc_n$, which contains all possible reconstruction vectors generated by this code. That is,
\[
\Cc_n\triangleq \{g_n(f_n(x^n)): x^n\in\Qc\}.
\]
Clearly, $|\Cc_n|\leq 2^r$. 
Suppose a decoder desires to recover the signal $x^n$ from noisy underdetermined linear projections $y^m=Ax^n+z^m$,  by employing the compression code $(f_n,g_n)$, without explicitly studying the  set $\Qc$.  To achieve this goal, \cite{JalaliM:13-isit} and\cite{JalaliM:15-acha} propose the compressible signal pursuit  (CSP) optimization  defined as
\[
\tilde{x}^n\triangleq \argmin_{c^n\in\Cc_n}\|y^m-Ac^n\|_2^2 .
\]
In other words, to recover the original signal from sufficient number of random linear projections,  through an exhaustive search over the codebook of the compression code, the CSP  seeks the reconstruction vector in  $\Cc_n$ that minimizes  the measurement error. It can be shown that the  required number of measurements for successful recovery  depends on the rate-distortion trade-off of the compression code and the desired  accuracy \cite{JalaliM:13-isit,JalaliM:15-acha}.

{The results of \cite{JalaliM:13-isit} and \cite{JalaliM:15-acha} on deterministic signals establish the foundations of building compression-based compressed sensing decoders. However, since the studied model only concerns  deterministic signals, the results do not illustrate the fundamental connections between the source structure, which is captured by its distribution, its information theoretic rate-distortion function and the number of measurements required by  compression-based decoders. In this paper we focus on \textit{stationary analog processes}, and study the performance of the CSP algorithm, as a compression-based compressed sensing recovery algorithm. This shift from deterministic signals to stochastic stationary processes enables us to 
\begin{enumerate}
\item characterize the performance of the CSP algorithm in terms of the information theoretic rate-distortion function of the source, and illustrate the connection between the asymptotic number of measurements required by the CSP and the rate-distortion dimension of the source process;
\item  establish new fundamental connections between the rate-distortion dimension of the source, and its information dimension, which serves as its measure of complexity;
\item employ the established connection and prove asymptotic optimality of the CSP algorithm for cases in which the fundamental limits of compressed sensing is known; and
\item employ universal compression codes, and design a compression-based universal compressed sensing recovery algorithm.
\end{enumerate} }

Since the sources of interest in compressed sensing applications are usually  analog,  compression codes employed in building  compression-based decoders has to be  lossy codes. As a result, the reconstruction given by the CSP algorithm is also a lossy reconstruction. In other words, the resulting compression-based recovery algorithm is a lossy compressed sensing algorithm, where there is a trade-off between the number of measurements, the  quality  of the reconstruction, and the rate and the distortion  of the compression code. In this paper we  mainly focus on the this trade-off and  leave the complexity issues for future extensions of this work, where a more algorithmic approach would be necessary to handle or at least approximate the minimization  in CSP with reasonable time-complexity.

In a standard compressed sensing setting, the decoder recovers the signal \emph{losslessly} or almost losslessly from an  underdetermined set of linear equations. While compression-based recovery algorithms enable  us to exploit more complex structures, there is an inherent loss due to the underlying lossy compression codes. By letting the distortion of the compression code become arbitrarily small, we can achieve almost lossless recovery which is of interest in compressed sensing problems. Although arbitrarily small distortion for an analog signal dictates an arbitrarily large compression code rate, the RDD of the code,  which is the quantity that directly relates the compression code's rate-distortion behavior to the number of required measurements, remains bounded.



In this paper we consider a stochastic analog source $\Xbbf=\{X_i\}_{i=-\infty}^{\infty}$ and signal $X^n$ generated by this source. Instead of observing $X^n$ directly, a decoder measures $Y^m=AX^n+Z^m$, $m<n$, and aims at estimating $X^n$ from $Y^m$. Here, similar to the deterministic setup, $A\in\mathds{R}^{m\times n}$ and $Z^m$ denote the measurement matrix and the stochastic noise in the system, respectively.  Assume that the data acquisition decoder
has access to  a ``good'' lossy compression code for the source $\Xbbf$, and employs it to recover the vector $X^n$ via the CSP algorithm. Our first major contribution in this paper is to derive the trade-off between the performance of the compression code, stated  in terms of its rate, distortion and  excess distortion probability, and the performance of the CSP algorithm, summarized by  the required  number of linear measurements and its achieved reconstruction quality. We prove that, asymptotically, for large $n$ and as the distortion of the compression codes goes to zero, the normalized  number of random linear measurements required by the CSP algorithm is equal to the RDD \cite{KawabataD:94} of the source. It is known that for a random variable (or vector), the (upper and lower) RDD is equal to the (upper and lower) information dimension (ID) of the random variable \cite{KawabataD:94}. Our second major contribution is to extend this result to analog stationary processes, and to prove that, under some regularity conditions,  the  RDD of a stationary process is equal to its ID, defined in \cite{JalaliP:14-arxiv}. This combined with the results of \cite{WuV:10} establishes the asymptotic optimalilty of CSP 
for stationary memoryless sources.

We study  piecewise-constant signals to illustrate our results on the connection between RDD and ID.  Piecewise-constant signals are used widely to model many natural signals in the signal processing, compression, and denoising literature. We derive upper and lower bounds on the rate-distortion functions of such signals, when they are modeled by a first-order Markov process and use these bounds to  evaluate the  RDD of such processes.


Given our focus  on building compression-based compressed sensing algorithms, we also address two related important questions: Can one derive a universal compressed sensing recovery algorithm based on a given universal compression code? How well will such a scheme perform? 
In information theory, universal codes refer to algorithms that do not require knowledge of the source distribution and yet achieve the optimal performance. Universal lossy or lossless compression \cite{LZ77,LZ,Sakrison:70,Ziv:72,NeuhoffG:75,NeuhoffS:78,Ziv:80,GarciaN:82}, universal denoising \cite{kolmogrov_sampler,dude} and universal prediction \cite{MerhavGutmanFeder92,MerhavF:98} are some examples of  universal coding problems that have been well-studied in information theory. The problem of universal compressed sensing and the existence of such algorithms that can recover a signal from its underdetermined set of random linear observations without knowing the source model has recently been studied both for deterministic \cite{JalaliM:14} and probabilistic signal models \cite{BaronD:11,BaronD:12,JalaliP:14-arxiv}. 
Our third major contribution is addressing both of the above questions. We prove that a family of universal fixed-distortion compression codes yields a family of universal compressed sensing recovery algorithms. This connection has important implications both in theory and in practice. 



The organization of the paper is as follows. Section \ref{sec:CSP-performance} studies  the performance the CSP algorithm when applied to compressed sensing of a stationary process.  Section \ref{sec:RDD} examines the properties of complexity measures for analog stationary processes, and establishes  a connection between the ID  and the RDD of such processes and also provides bounds on the rate-distortion region of the  piecewise constant source modeled by a first-order Markov process  to illustrate this relationship.   Section \ref{sec:AlmostLosslessCSP} provides the performance and optimality of CSP for almost lossless recovery using the established connection between RDD and ID. Universal CSP (UCSP) is introduced in Section \ref{sec:UCSP-performance} as a universal compressed sensing recovery algorithm, and its performance trade-offs are studied.  
Section \ref{sec:proofs} presents the proofs of some of the results, 
and Section \ref{sec:conc} concludes the paper.


\subsection{Notation}

Calligraphic letters such as $\Xc$ and $\Yc$ denote sets. The size of a set $\Xc$ is denoted by $|\Xc|$. Capital letters like $X$ and $Y$ represent random variables. For a random variable $X$, $\Xc$ denotes its alphabet.  For $x\in\mathds{R}$, $\lceil x \rceil$ ($\lfloor x \rfloor$) represents  the smallest (largest) integer larger (smaller) than $x$. For $b\in\mathds{N}^+$, $[x]_b$ denotes the $b$-bit approximation of $x$, \ie for $x=\lfloor x \rfloor+\sum_{i=1}^{\infty}(x)_i2^{-i}$, $(x)_i\in\{0,1\}$,
\[
[x]_b=\lfloor x \rfloor+\sum_{i=1}^{b}(x)_i2^{-i}.
 \]Also, let $\langle x\rangle_b$ defined as
 \[\langle x\rangle_b={\lfloor bx\rfloor \over b},\]
denote the discretized version of $x$. For $x\in\mathds{R}$, $\delta_x$ denotes the Dirac measure with an atom at $x$. Throughout the paper, $\log$ and $\ln$ refer to the logarithm in base 2 and natural logarithm, respectively. $\{0,1\}^*=\cup_{n=1}^{\infty}\{0,1\}^n$ denotes the set of all binary sequences of finite length. For a binary sequence $b\in\{0,1\}^n$, $|b|$ denotes the length of the sequence.

\section{Compressible signal pursuit}\label{sec:CSP-performance}


This section extends the CSP algorithm proposed in  \cite{JalaliM:13-isit} and \cite{JalaliM:15-acha} to stochastic processes. The intuition behind the CSP algorithm is  that if  a set of signals can  be compressed efficiently using a compression code, then the structure employed by the compression code can indirectly, through the application of the compression code, be used in building efficient compressed sensing recovery algorithms. In other words, the CSP algorithm, through the compression code, extracts all the useful structure present in the data to reduce the number of linear measurements.


Consider a random vector $X^n$, generated by stationary process $\Xbbf=\{\X_i\}_{\indx=0}^\infty$, where $X_i\in\Xc$.  A compressed sensing decoder observes  a  linear projection of $X^n$, 
\[
Y^m=A X^n,
\]
where $A\in{\rm I\!R}^{m\times n}$ denotes the measurement matrix with $m<n$, and aims at estimating $X^n$. 



A fixed-length lossy compression code for the source $\mathbf{X}$ operating at rate $R$ and blocklength $n$ is specified as $(n,f_n,g_n)$, where 
\[
f_n:\alphX^n\mapsto\{1,2,\ldots,2^{nR}\}
\]
and   
\[
g_n:\{1,2,...,2^{nR}\}\mapsto\hat\alphX^n
\] 
denote the encoding and  the decoding functions, respectively and $\hat{\Xc}$  is the reconstruction alphabet. Throughout the paper, we mainly focus on the case where $\cal{X}=\hat{\cal{X}}={\rm I\!R}$ with squared error distortion $d(x,\hat{x})=(x-\hat{x})^2$, where, $d: \cal{X}\times \hat{\cal{X}}\to {\rm I\!R}^+$ denotes a per-letter distortion measure.
Traditionally, the performance of a lossy compression code is measured  in terms of its rate, $R$, and expected average distortion  $D\triangleq \E[d_n(X^n,\Xh^n)],$ where $\Xh^n=g_n(f_n(X^n))$, and 
\[
d_n(x^n,\hat{x}^n)\triangleq {1\over n}\sum_{i=1}^nd(x_i,\hat{x}_i),
\] 
for any $x^n\in\Xc^n$ and $\xh^n\in\hat{\Xc}^n$.
Another possible performance metric for a lossy compression code is its \emph{excess distortion  probability} \cite{Marton:74}, which is a stronger notion than expected distortion. The excess distortion probability of a code is defined as the probability that the average per-letter distortion between the source and reconstruction blocks exceeds some predetermined threshold, \ie $\P(d_n(X^n,\Xh^n) > D ).$ 
The distortion $D$ is said to be achievable at rate $R$ if for any $\e>0$, there is a large enough $n_0$ such that for any $n>n_0$ the $(n,f_n,g_n)$ code  satisfies 
\[
\P(d_n(X^n,\Xh^n) > D )\leq\e,
\] 
i.e. the excess distortion probability $\e$ can be driven to zero as $n\to\infty$.

\begin{remark}

Let $R_{m}(\mathbf{X},D)$ and  $R_{a}(\mathbf{X},D)$ denote the rate-distortion functions of a source $\Xv$ under vanishing excess distortion probability and expected average distortion, respectively. While $R_{m}(\mathbf{X},D)$ and  $R_{a}(\mathbf{X},D)$ are not equal in general, for stationary ergodic processes $R_{m}(\mathbf{X},D)=R_{a}(\mathbf{X},D)$ \cite{SteinbergV:96,IharaM:05,Iriyama:05}. Throughout this paper we focus only on such processes; therefore, we drop the subscript $m$ or $a$, and let $R(\mathbf{X},D)$ denote the rate-distortion function of the source. 


\end{remark}

Let 
\[
\Cc_n\triangleq \{g_n(f_n(x^n)): \; x^n\in\Xc^n\}
\] 
denote the codebook of this compression code. Clearly, $|\Cc_n|\leq 2^{nR}$. Given the source output $\X^n$ and the  observation vector $Y^m=A\X^n$, let   $\Xt^n$ denote the solution of the CSP algorithm employing the $(n,f_n,g_n)$ code. In other words,
\begin{align}
\Xt^n=\argmin_{x^n\in\Cc_n}\|Y^m-Ax^n\|_2^2. \label{CSPsolution}
\end{align}

The following theorem derives an upper bound on the loss incurred by the CSP in recovering $X^n$. The bound on reconstruction distortion holds with high probability and depends on the parameters of the compression code $n$, $R$, $D$ and $\epsilon$, and the number of measurements $m$. It is important to note that the compression code used by the CSP algorithm  is not required to be an optimal code, and the theorem also holds even if the CSP algorithm is based on an off-the-shelf compression code.

\begin{theorem}\label{NoiselessThm}
Consider $Y^m=A X^n$, a system of random linear observations  with measurement matrix $A\in {\rm I\!R}^{m\times n}$, where $A_{i,j}$ are independently and identically distributed (i.i.d.)~ as $\mathcal{N}(0,1)$. Let $\Cc_n$ be a lossy compression code for $X^n$ operating at rate $R$ that achieves  distortion $D$ with excess distortion probability $\epsilon$. 
Without any loss of generality assume that the source is normalized such that $D<1$. For arbitrary $\a>0$ and $\eta>1$, let $\d={\eta\over \log{1\over D}}+\a,$ and
\[
{m\over n}=\frac{2\eta R}{\log{1\over D}},
\] be the normalized number of observations. Let  $\Xt^n$ denote to the solution of the CSP algorithm given in \eqref{CSPsolution}.
Then,
\begin{align*}
\P&\Big(\frac{1}{\sqrt{n}}\lVert\; X^n - \Xt^n\rVert_2 \geq (2+\sqrt{n/m})D^{\frac{1}{2}(1-\frac{1+\d}{\eta})}  \Big)\\
&\;\;\leq \e+2^{-\frac{1}{2}nR\alpha}+ \mathrm{e}^{-{m\over 2}}.
\end{align*}
\end{theorem}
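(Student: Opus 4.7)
The plan is to route the analysis through the compression reconstruction $\hat{X}^n := g_n(f_n(X^n)) \in \mathcal{C}_n$, which is itself a feasible candidate in the CSP minimization. Since $Y^m = AX^n$, the optimality of $\tilde{X}^n$ immediately yields the key comparison
\[
\|A(X^n - \tilde{X}^n)\|_2 \;\leq\; \|Y^m - A\hat{X}^n\|_2 \;=\; \|A(X^n - \hat{X}^n)\|_2 .
\]
From here I would upper-bound the right-hand side using the compression guarantee $\|X^n - \hat{X}^n\|_2 \leq \sqrt{nD}$ and the spectral norm of $A$, and convert a lower bound on $\|A(X^n - \tilde{X}^n)\|_2$ back into one on $\|X^n - \tilde{X}^n\|_2$ by proving a uniform lower bound on $\|A(X^n - c^n)\|_2$ across all $c^n \in \mathcal{C}_n$.

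Three events carry the argument. First, the excess-distortion hypothesis gives $E_1 := \{\|X^n - \hat{X}^n\|_2^2 \leq nD\}$ with $\P(E_1^c) \leq \epsilon$. Second, standard concentration for the top singular value of an i.i.d.\ Gaussian matrix, applied with deviation parameter $\sqrt{m}$, produces $E_2 := \{\|A\|_{\mathrm{op}} \leq \sqrt{m}(2+\sqrt{n/m})\}$ with $\P(E_2^c) \leq e^{-m/2}$; this is the origin of the $(2+\sqrt{n/m})$ prefactor appearing in the theorem. Third, for any fixed $v \in \mathbb{R}^n$ the ratio $\|Av\|_2^2/\|v\|_2^2$ is $\chi^2_m$-distributed, so applying the Chernoff lower-tail bound $\P(\chi^2_m \leq \alpha_0 m) \leq \exp\!\bigl(\tfrac{m}{2}(1 - \alpha_0 + \ln\alpha_0)\bigr)$ and taking a union bound over the at most $2^{nR}$ vectors $X^n - c^n$, $c^n \in \mathcal{C}_n$, yields an event $E_3$ on which $\|A(X^n - c^n)\|_2 \geq \sqrt{\alpha_0 m}\,\|X^n - c^n\|_2$ uniformly in $c^n$.

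On $E_1 \cap E_2 \cap E_3$ the three estimates chain together as
\[
\sqrt{\alpha_0 m}\,\|X^n - \tilde{X}^n\|_2 \;\leq\; \|A(X^n - \hat{X}^n)\|_2 \;\leq\; \|A\|_{\mathrm{op}}\sqrt{nD} \;\leq\; \sqrt{mnD}\,(2+\sqrt{n/m}),
\]
which after dividing by $\sqrt{n\alpha_0 m}$ gives $\tfrac{1}{\sqrt{n}}\|X^n - \tilde{X}^n\|_2 \leq (2+\sqrt{n/m})\sqrt{D/\alpha_0}$. Choosing $\alpha_0 := D^{(1+\delta)/\eta}$ with the prescribed $\delta$ then makes $\sqrt{D/\alpha_0} = D^{\frac{1}{2}(1-(1+\delta)/\eta)}$, which is exactly the exponent in the theorem; a final union bound on the three failure events produces $\epsilon + e^{-m/2} + 2^{-nR\alpha/2}$.

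The main piece of bookkeeping lies in the third event. The Chernoff exponent carries a residual $\exp(m(1-\alpha_0)/2)$ whose logarithm scales linearly with $n$ because $m = 2n\eta R/\log(1/D)$, and the summand $\eta/\log(1/D)$ in $\delta = \eta/\log(1/D) + \alpha$ is included precisely to absorb this residual when combined with the $2^{nR}$ codebook factor, so that the net failure probability collapses to the clean $2^{-nR\alpha/2}$ appearing in the statement. Carrying out this cancellation by substituting the explicit $\alpha_0$ and using $m\log(1/D) = 2n\eta R$ is the only delicate calculation; the rest of the argument is a composition of the comparison above with textbook Gaussian concentration.
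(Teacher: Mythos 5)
Your proposal matches the paper's proof essentially step for step: your events $E_1$, $E_2$, $E_3$ are exactly the paper's $\Ec_0$, $\Ec_2$, $\Ec_1$, your $\alpha_0 = D^{(1+\delta)/\eta}$ is the paper's choice $1-\tau$, the chaining inequality through $\sigma_{\max}(A)\sqrt{nD}$ is identical, and the exponent bookkeeping you outline is exactly what the paper carries out. The only cosmetic difference is that the paper routes the bound on the third event through a Fubini/Markov step rather than stating the union bound over the joint $(A,X^n)$-probability directly, but the estimate and conclusion are the same.
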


\begin{proof}
The proof 
is provided in Section \ref{sec:proofs}.
\end{proof}

Theorem \ref{NoiselessThm} states that using a class of compression codes $\Cc_n$ operating at rate $R$ and distortion $D$, 
with  $m=\frac{2\eta Rn}{\log(1/D)}$ random linear measurements ($\eta>1$) of $n$ samples of a stochastic process, the distortion incurred by CSP in recovering $X^n$ can be upper-bounded with probability approaching one as $n$ grows without bound.  In the limit when $D$ approaches zero, the normalized number of measurements required by the CSP algorithm, for almost lossless recovery of the source, depends on the limit of $\frac{ 2R}{\log(1/D)}$. If the compression code used by the CSP operates close to the fundamental rate-distortion tradeoff of the source, this limit approaches the rate-distortion dimension of the source \cite{KawabataD:94}. To better understand the performance of the CSP algorithm, in the following section, we focus on this quantity and explore its connections with other known measures of complexity for stationary processes.

As stated in Theorem \ref{NoiselessThm}, $\eta>1$ is a free parameter that affects the performance of the CSP algorithm.  Choosing a small $\eta$, arbitrarily close to $1$, minimizes the number of random linear measurements, $m$, required by the CSP. On the other hand, since the reconstruction distortion scales as $D^{{\frac{1}{2}(1-\frac{1+\d}{\eta})}}$, where $\d>0$, for optimal scaling of the distortion $(\sqrt{D})$, $\eta$ needs to be large. In other words, the closer ${\frac{1}{2}(1-\frac{1+\d}{\eta})}$ gets to $1$, the better performance we get from CSP in terms of reconstruction distortion. Therefore, as $\eta$ varies, there is a trade-off between the number of measurements on one hand and the scaling of the reconstruction distortion on the other hand. 


Theorem \ref{NoiselessThm} characterizes the performance of the CSP algorithm in recovering a random process, when there is no noise in the measurement process. In reality, there is always some noise in the system.  The following theorem proves the robustness of  the performance of the CSP algorithm to measurement noise. Specifically assume that instead of $Y^m=AX^n$, the decoder observes $Y^m=AX^n+Z^m$, where $Z^m$ denotes some random measurement noise. Further assume that the decoder employs the CSP algorithm as before to recover $X^n$ from measurements $Y^m$. That is, $\Xt^n$ is still given by \eqref{CSPsolution}. The following theorem states that if the noise power is not very large and the compression code's distortion $D$ stays away from zero, then the performance of the CSP algorithm essentially stays the same.

\begin{theorem}\label{Noisy Thm}
Consider $Y^m=A X^n+Z^m$, a noisy system of random linear observations where $Z^m$ is the additive noise and $A\in {\rm I\!R}^{m\times n}$ is the measurement matrix where $A_{i,j}$ are i.i.d.~as $\mathcal{N}(0,1)$. Assume that the average power of the noise can be bounded by $\sigma_m^2$ with probability $1-\e_m$, i.e.
\[
\P\Big({1\over \sqrt{m}}\|Z^m\|_2>\sigma_m\Big)<\e_m.
\]

Let $\Cc_n$ be a lossy compression code for $X^n$ operating at rate $R$ that achieves  distortion $D$ with excess distortion probability $\epsilon$. 
Without any loss of generality assume that the source is normalized such that $D<1$. For arbitrary $\a>0$ and $\eta>1$, let $\d={\eta\over \log{1\over D}}+\a,$ and $m=\frac{2\eta nR}{\log(1/D)}$ be the normalized number of observations, and let $\Xt^n$ be the solution of the CSP algorithm, as given by \eqref{CSPsolution}. Then,
\begin{align*}
P&\bigg(\frac{1}{\sqrt{n}}\lVert\; X^n - \Xt^n\rVert_2
\geq \nonumber\\
&\hspace{0.5cm} \; (2+\sqrt{n/m})D^{\frac{1}{2}(1-\frac{1+\d}{\eta})} +{2\sigma_m \over  \sqrt{D^{1+\d \over \eta}n}}\bigg)\nonumber\\
&\hspace{0.5cm} \leq \e_m+\e+2^{-{1\over 2}nR\alpha} + \mathrm{e}^{-\frac{m}{2}}.
\end{align*}
\end{theorem}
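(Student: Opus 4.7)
The plan is to adapt the proof of the noiseless case (Theorem \ref{NoiselessThm}) by carrying the additive noise through the CSP optimality inequality with one application of the triangle inequality; every other piece of the argument essentially carries over unchanged. Let $\hat{X}^n = g_n(f_n(X^n))$ denote the compression reconstruction, so $\hat{X}^n \in \mathcal{C}_n$ and, on an event $E_1$ of probability at least $1-\epsilon$, satisfies $\|X^n-\hat{X}^n\|_2 \leq \sqrt{nD}$. I would work on the intersection of $E_1$ with three further high-probability events: $E_2 = \{\|Z^m\|_2 \leq \sqrt{m}\sigma_m\}$, which has probability at least $1-\epsilon_m$ by hypothesis; $E_3 = \{\|A\|_{\mathrm{op}} \leq 2\sqrt{m}+\sqrt{n}\}$, which has probability at least $1-\mathrm{e}^{-m/2}$ by the standard Gaussian operator-norm deviation bound; and a uniform codebook event $E_4$ asserting $\|A(X^n - c^n)\|_2 \geq \sqrt{m\,D^{(1+\delta)/\eta}}\,\|X^n-c^n\|_2$ simultaneously for every $c^n \in \mathcal{C}_n$.

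From the defining optimality of $\tilde{X}^n$ and the identity $Y^m = AX^n + Z^m$, I would write $\|A(X^n-\tilde{X}^n) + Z^m\|_2 \leq \|A(X^n-\hat{X}^n) + Z^m\|_2$ and then apply the triangle inequality on each side to peel off the noise, obtaining
\[
\|A(X^n-\tilde{X}^n)\|_2 \;\leq\; \|A(X^n-\hat{X}^n)\|_2 + 2\|Z^m\|_2.
\]
On $E_1\cap E_2\cap E_3$ the right-hand side is at most $(2\sqrt{m}+\sqrt{n})\sqrt{nD} + 2\sqrt{m}\sigma_m$ (bounding $\|A(X^n-\hat{X}^n)\|_2 \leq \|A\|_{\mathrm{op}}\|X^n-\hat{X}^n\|_2$), while on $E_4$ the left-hand side is at least $\sqrt{m\,D^{(1+\delta)/\eta}}\,\|X^n-\tilde{X}^n\|_2$. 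Rearranging and dividing through by $\sqrt{n\,m\,D^{(1+\delta)/\eta}}$ reproduces exactly the two terms of the stated bound: the factor $(2+\sqrt{n/m})\,D^{\frac{1}{2}(1-(1+\delta)/\eta)}$ coming from $E_3$ and the additive noise term $2\sigma_m / \sqrt{D^{(1+\delta)/\eta}\,n}$ coming from $E_2$.

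The main technical obstacle, and the only place the parameter choices really enter, is establishing that $E_4$ fails with probability at most $2^{-nR\alpha/2}$. For any deterministic vector $v$ independent of $A$, $\|Av\|_2^2/\|v\|_2^2 \sim \chi^2_m$, and the chi-squared left-tail inequality $\P(\chi^2_m \leq m\beta) \leq (\beta\,\mathrm{e}^{1-\beta})^{m/2}$ gives concentration for each $v = X^n-c^n$ after conditioning on $X^n$. Taking a union bound over the at most $2^{nR}$ codewords in $\mathcal{C}_n$ with $\beta = D^{(1+\delta)/\eta}$ and then substituting $m = 2\eta n R/\log(1/D)$ together with the prescribed $\delta = \eta/\log(1/D) + \alpha$, the $\beta^{m/2}$ factor becomes $2^{-(1+\delta)nR}$, the $\mathrm{e}^{(1-\beta)m/2}$ factor is absorbed by the $\eta/\log(1/D)$ contribution to $\delta$, and the residual collapses to $2^{-nR\alpha/2}$. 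A final union bound across $E_1, E_2, E_3, E_4$ yields the failure probability $\epsilon_m + \epsilon + 2^{-nR\alpha/2} + \mathrm{e}^{-m/2}$ claimed in the statement. The delicate part of this step is verifying that the chi-squared penalty is controllable in the regime where $\beta$ itself is vanishingly small; the specific scaling $m \asymp nR/\log(1/D)$ imposed by the theorem is chosen precisely to absorb the $(1+\delta) nR$ bits of slack generated by this vanishing $\beta$.
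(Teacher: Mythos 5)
Your proposal reproduces the paper's argument: the single new ingredient is the triangle-inequality step $\|A(X^n-\tilde{X}^n)\|_2 \leq \|A(X^n-\hat{X}^n)\|_2 + 2\|Z^m\|_2$ (obtained from the CSP optimality and $Y^m = AX^n+Z^m$), after which the analysis is inherited verbatim from Theorem~\ref{NoiselessThm} by conditioning on the same four events --- your $E_1, E_2, E_3, E_4$ are exactly the paper's $\mathcal{E}_0, \mathcal{E}_3, \mathcal{E}_2, \mathcal{E}_1$ with $\tau = 1-D^{(1+\delta)/\eta}$, and the $\chi^2$ left-tail bound, union bound over the $2^{nR}$ codewords, and parameter substitutions are identical. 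The only cosmetic difference is that you quote the joint probability bound over $(A,X^n)$ directly, whereas the paper of Theorem~\ref{NoiselessThm} adds a Fubini/Markov step to convert it into a statement about typical $A$, which is what costs the factor of two in the exponent $2^{-nR\alpha/2}$.
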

\begin{proof}
The proof is provided in Section \ref{sec:proofs}.
\end{proof}

The effect of the noise on the error is captured  by the term, ${2\sigma_m \over  \sqrt{D^{1+\d \over \eta}n}}$, which disappears as $n\to\infty$. This is due to the fact that by drawing the entries of the measurement matrix based on an i.i.d. $\Nc(0,1)$ distribution, the signal to noise ratio (SNR) of each measurement goes to infinity as $n\to\infty$. If instead the entries of $A$  are drawn as $\Nc(0,{1\over n})$, then the term dependent on noise becomes  ${2\sigma_m /  \sqrt{D^{1+\d \over \eta}}}$, which does not disappear as $n$ grows to infinity.  

%
%

\section{Information and complexity measures}\label{sec:RDD}

 To develop a unified approach to the problem of structured signal recovery, and also to fundamentally understand the connections between the problems of data compression and compressed sensing, a universal notion of complexity for analog signals is required. Such a notion of complexity  is expected to effectively measure all  the information contained in the structure of an analog signal.

For discrete signals, there are well-known measures of complexity in the information theory literature. The entropy $H(X)$ and the entropy rate  $\bar{H}(\Xbbf)=\lim_{n\to\infty}H(X_n|X^{n-1})$ measure the complexity of random variable $X$ and  stationary  process $\Xbbf=\{X_i\}$, respectively. Both of these measures are closely connected to the minimum number of bits per symbol required for representing stochastic sources \cite{cover}. However, when we shift from discrete alphabet to analog, both the entropy, and the entropy rate become  infinite. Therefore, such measures cannot be used for capturing the structure of such signals.

To illustrate what is meant for an analog process to be structured, consider a stationary memoryless (i.e., i.i.d.) process $\Xbbf=\{X_i\}_{i=0}^{\infty}$ such that $X_i\sim (1-p)\delta_0+p f_c$, where $f_c$ denotes the probability density function (pdf) of an  absolutely continuous distribution. In other words, for each $i$, with probability $p$, $X_i$ is exactly equal to zero, otherwise, it is drawn from $f_c$. From this definition,  a block $X^n$ generated by this source contains around $n(1-p)$ entries equal to zero, and the rest of the entries are real numbers in  the domain of $f_c$. To describe $X^n$ with a certain precision, for zero entries, it suffices to describe their locations. The number of bits required for this description does not depend on the reconstruction quality. However, for the remaining approximately $np$ elements of $X^n$, it can be proved that the required number of bits grows proportionally to the desired reconstruction quality. This intuitively suggests that the probability $p$, which controls the number of non-zero elements in $X^n$, is a fundamental quantity related to the complexity of $X^n$.  This intuition is nicely captured by the notion of ID introduced by R\'enyi \cite{Renyi:59}.


\begin{definition}[R\'enyi information dimension \cite{Renyi:59}]
The R\'enyi  upper and lower IDs of  an analog random variable $X$ are defined as
\[
\bar{d}(X)=\limsup_{b\to\infty} {H(\langle X\rangle_b)\over \log b},
\]
and
\[
\underline{d}(X)=\liminf_{b\to\infty} {H(\langle X\rangle_b)\over \log b},
\]
respectively. If the two limits coincide, $d(X)=\bar{d}(X)=\underline{d}(X)$ is defined as the R\'enyi ID of $X$.
\end{definition}

Note that while the above definition of the R\'enyi IDs is in terms of the entropy of the $b$-level quantized version of $X$ normalized by the number of bits required for binary representation of it, $\log b$, it is easy to see that we can equivalently find them in terms of the entropy of the $b$-bit approximation of $X$, $[X]_b$, normalized by $b$, the number of bits i.e.
$
\bar{d}(X)=\limsup_{b\to\infty} {H([X]_b)\over b},
$
and
$
\underline{d}(X)=\liminf_{b\to\infty} {H([X]_b)\over b}.
$

The R\'enyi ID of a random variable serves as a measure of complexity for analog random variables.  To shed some light on this measure, consider the i.i.d. sparse source  $\Xbbf$ described earlier. It can be proved that the R\'enyi ID of each $X_i$ is equal to $p$, which is the probability that $X_i$ is non-zero \cite{Renyi:59}. Decreasing the parameter $p$ increases  the sparsity level of the output of such a source, and hence intuitively decreases its complexity. This phenomenon is captured by the R\'enyi ID of $X$.  In fact, $\delta_0$ can be changed to any discrete probability distribution and the result will not change since the R\'enyi ID of a discrete source is 0.    The notion of R\'enyi ID for random variables or vectors was extended in \cite{JalaliP:14-arxiv} to define the ID of analog stationary processes.

\begin{definition}[ID of a stationary process \cite{JalaliP:14-arxiv}]
  The  $k$-th order upper and lower IDs of   stationary process ${\Xbbf}=\{X_i\}_{i=-\infty}^{\infty}$ are defined as
   \[
  \bar{d}_k({\Xbbf})=\limsup_{b\to \infty} {1\over b}H([X_{k+1}]_b|[X^k]_b),
  \]
   and
   \[
   \underline{d}_k({\Xbbf})=\liminf_{b\to \infty} {1\over b}H([X_{k+1}]_b|[X^k]_b),
   \]
respectively. The upper and lower ID of process ${\Xbbf}$ are defined as
\[
\bar{d}_o({\Xbbf})=\lim_{k\to\infty}\bar{d}_k({\Xbbf})
 \]
 and
 \[
 \underline{ d}_o({\Xbbf})=\lim_{k\to\infty}\underline{d}_k(X),
  \]
  respectively, when the limits exist. If $\bar{d}_o({\Xbbf})=\underline{d}_o({\Xbbf})$, the ID of process $\mathbb{\mathbf X}$, ${d}_o({\Xbbf})$, is defined as ${d}_o({\Xbbf})=\bar{d}_o({\Xbbf})=\underline{d}_o({\Xbbf})$.
\end{definition}

For a stationary memoryless i.i.d. process ${\Xbbf}=\{X_i\}_{i=-\infty}^{\infty}$,  this definition coincides  with that of R\'enyi\rq{}s ID of the first order marginal distribution of the process $\Xbbf$.  That is $\bar{d}_o({\Xbbf})=\bar{d}(X_1)$ and $\underline{d}_o({\Xbbf})=\underline{d}(X_1)$. For sources with memory, taking the limit as   the memory parameter $k$ grows to infinity allows $d_o(\Xbbf)$ to capture the overall structure that is  present  in an analog stationary process. It can be proved that $d_o(\Xbbf)\leq 1$, for all stationary processes, and if the stationary process $\Xbbf$ is structured,  $d_o(\Xbbf)$ is strictly smaller than one \cite{JalaliP:14-arxiv}. As an example of a structured stationary analog process with memory, consider a piecewise constant signal modeled by a first order Markov process ${\Xbbf}=\{\X_\indx\}_{\indx=1}^\infty$, such that  conditioned on $\X_{\indx-1}=\x_{\indx-1}$, $\X_\indx$ is distributed according to $(1-p)\delta_{\x_{\indx-1}}+pf_c$ where $f_c$ denotes the pdf of an absolutely continuous distribution with bounded support, defined over an interval $(l,u)$. In other words, at each time $i$,  the process either makes a jump and takes a value drawn from distribution $f_c$, or it stays at $X_{i-1}$. The decision is made based on the outcome of an i.i.d. $\operatorname{Bern} \left({p}\right)$ random variable independent of all past values of $\Xbbf$. While the output of this source is  not  sparse,  it is clearly a structured process. This intuition is indeed captured by the ID of the process;  it can be proved that ${d}_o(\Xbbf)=p$, \ie the probability that the process makes a jump determines the complexity of this process \cite{JalaliP:14-arxiv}. 



For a stationary memoryless process, under some mild conditions on the distribution, \cite{WuV:10} proves that the R\'enyi ID of the first order marginal distribution of the source characterizes the fundamental limits of compressed sensing. In other words, given a process ${\Xbbf}$,  asymptotically, as the blocklength grows to infinity,  the minimum number of linear projections, $m$, normalized by the ambient dimension, $n$, that is required for  recovering source $X^n$ from its linear projections is shown to be equal to $d(X_1)$, which is the R\'enyi ID of $X_1$. In \cite{JalaliP:14-arxiv}, it is shown that asymptotically slightly more than $n\bar{d}_o({\Xbbf})$ random linear projections suffice  for \emph{universal} recovery of $X^n$ generated by any Markov process of any order, without knowing the source model, where $\bar{d}_o({\Xbbf})$ denotes the upper ID of the process $\Xbbf$. These results provide an operational interpretation to the R\'enyi ID of a random variable and its generalization to stationary processes.

The focus of this paper is on the application of compression codes in building compressed sensing recovery algorithms. The rate-distortion function of a stationary source measures the minimum number of bits per source symbol required for  achieving a given reconstruction quality. It turns out that for an analog process as the reconstruction becomes finer, the behavior the rate-distortion function is connected to the level of structuredness of the source process and ID notions mentioned earlier. In the rest of this section, we first review the known results on this connection, and then  prove our main result of this section, which, under some mild conditions,  establishes this connection for general stationary processes.

Consider a metric space  $(\mathds{R}^k,\rho)$, and random vector $X^k$. The rate-distortion function of $X^k$ under expected distortion constraint
\[
d(x^k,\xh^k)=\rho(x^k,\xh^k)^r
\]
is defined as
\[
R_r(X^k,D)=\inf_{ \E[d(X^k,\Xh^k)]\leq D}I(X^k;\Xh^k).
\]
\begin{definition}[Rate-distortion dimension (RDD) of a random vector \cite{KawabataD:94}]
The upper and lower RDDs of $X^k$ are defined as
\[
\overline{\dim}_R(X^k)=r\limsup_{D\to0}{R_r(X^k,D)\over \log{1\over D}},
\]
and
\[
\underline{\dim}_R(X^k)=r\liminf_{D\to0}{R_r(X^k,D)\over \log{1\over D}},
\]
 respectively. If $\overline{\dim}_R(X^k)=\underline{\dim}_R(X^k)$, the RDD of $X^n$ is defined as ${\dim}_R(X^k)=r\lim_{D\to0}{R_r(X^k,D)\over \log{1\over D}}$.
\end{definition}

The following theorem from \cite{KawabataD:94} establishes the connection between the R\'enyi ID of a random vector $X^k$ and its RDD, for any general distribution on $X^k$.

\begin{theorem}[Proposition 3.3 in \cite{KawabataD:94}] \label{thm:prop3-3}
Consider the metric space $(\mathds{R}^k,\rho)$, such that there exists $0<a_1\leq a_2<\infty$ for  which $a_1\max_{i=1}^k|x_i-\xh_i|\leq \rho(x^k,\xh^k)\leq a_2\max_{i=1}^k|x_i-\xh_i|,$
for all $x^k,\xh^k\in\mathds{R}^k$. Then,  for any distribution of $X^k$,
\[
\overline{\dim}_R(X^k)=\bar{d}(X^k),
\]
and
\[
\underline{\dim}_R(X^k)=\underline{d}(X^k),
\]
where $\overline{\dim}_R(X^k)$,  and $\underline{\dim}_R(X^k)$ denote the upper and lower RDD of $X^k$ under fidelity constraint $d(x^k,\xh^k)=\rho(x^k,\xh^k)^r$.
\end{theorem}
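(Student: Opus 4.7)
The plan is to prove the two equalities by establishing matching upper and lower bounds that relate $R_r(X^k,D)/\log(1/D)$ to the normalized quantized entropy $H(\langle X^k\rangle_b)/\log b$, with $b$ chosen so that $1/b$ scales with $D^{1/r}$. The metric equivalence $a_1\|\cdot\|_\infty\leq\rho\leq a_2\|\cdot\|_\infty$ makes both directions amount to comparing $\rho^r$-distortion with the grid spacing of a uniform $b$-level quantizer.

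For the upper bound on RDD (achievability), I use the explicit test channel $\hat X^k=\langle X^k\rangle_b$. Since componentwise $|X_i-\langle X_i\rangle_b|<1/b$, the right-hand metric inequality gives $\rho(X^k,\hat X^k)\leq a_2/b$ deterministically, hence $d(X^k,\hat X^k)\leq (a_2/b)^r$. The operational definition of $R_r$ as the infimum of $I(X^k;\hat X^k)$ over test channels meeting the distortion constraint then yields
\[
R_r\bigl(X^k,(a_2/b)^r\bigr)\leq I(X^k;\langle X^k\rangle_b)=H(\langle X^k\rangle_b).
\]
Setting $D=(a_2/b)^r$, so that $\log(1/D)=r(\log b-\log a_2)$, dividing, and taking $\limsup$ and $\liminf$ as $b\to\infty$ gives $\overline{\dim}_R(X^k)\leq\bar d(X^k)$ and $\underline{\dim}_R(X^k)\leq\underline d(X^k)$.

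For the lower bound (converse), I take any $\hat X^k$ with $E[d(X^k,\hat X^k)]\leq D$ and set $b=D^{-1/r}/\phi(D)$ for a slowly diverging $\phi$. Data processing gives
\[
I(X^k;\hat X^k)\geq I(\langle X^k\rangle_b;\hat X^k)=H(\langle X^k\rangle_b)-H(\langle X^k\rangle_b\mid\hat X^k),
\]
so the goal becomes showing $H(\langle X^k\rangle_b\mid\hat X^k)=o(\log(1/D))$. The left-hand metric inequality gives $E[\|X^k-\hat X^k\|_\infty^r]\leq D/a_1^r$, so by Markov's inequality the event $A=\{\|X^k-\hat X^k\|_\infty\leq D^{1/r}\phi(D)/a_1\}$ has probability at least $1-\phi(D)^{-r}$; conditioned on $\hat X^k$ and on $A$, the vector $\langle X^k\rangle_b$ is confined to at most $O\bigl((bD^{1/r}\phi(D))^k\bigr)=O(\phi(D)^k)$ lattice points, contributing only $O(k\log\phi(D))$ to the conditional entropy.

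The main technical obstacle is controlling the conditional entropy on the small-probability event $A^c$, since $X^k$ need not be compactly supported and a naive bound would be infinite. I plan to handle this with a Fano-style decomposition: introduce $B=\mathbf{1}_{A^c}$, split $H(\langle X^k\rangle_b\mid\hat X^k)\leq H(B\mid\hat X^k)+H(\langle X^k\rangle_b\mid\hat X^k,B)$, and control the tail term using a mild integrability condition on $X^k$ (which is needed anyway for $R_r(X^k,D)$ to be finite for some $D$) together with $P(B=1)=O(\phi(D)^{-r})$. Once $H(\langle X^k\rangle_b\mid\hat X^k)=o(\log(1/D))$ is established, dividing the chain of inequalities by $\log(1/D)=r\log b+O(\log\phi(D))$ and letting $\phi(D)\to\infty$ slowly enough that $\log\phi(D)/\log(1/D)\to 0$ yields $\overline{\dim}_R(X^k)\geq\bar d(X^k)$ and $\underline{\dim}_R(X^k)\geq\underline d(X^k)$, completing the proof.
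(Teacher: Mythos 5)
This result is quoted by the paper from Kawabata--Dembo (their Proposition~3.3) and is not re-proved here, so there is no paper-internal proof to compare against; I am therefore evaluating your argument on its own merits.

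Your overall structure --- achievability by feeding the uniform $b$-level quantizer into the test-channel infimum, converse by data processing through $\langle X^k\rangle_b$ and then showing $H(\langle X^k\rangle_b\mid\hat X^k)=o(\log(1/D))$ --- is the standard route for this kind of dimension-matching result, and your achievability direction is essentially correct (modulo the unaddressed but routine point that $D$ ranges over a continuum while $b$ is discrete, which one handles by monotonicity of $R_r$ in $D$ and squeezing between consecutive $b$'s).

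The converse, however, has a genuine gap. You outline a Fano-style split on the event $A^c$ and then write that the resulting tail term will be controlled ``using a mild integrability condition on $X^k$ (which is needed anyway for $R_r(X^k,D)$ to be finite for some $D$).'' This is problematic on two counts. First, the theorem is stated \emph{for any distribution of $X^k$}, so introducing an unspecified integrability hypothesis weakens the claim you are asked to prove; moreover, the parenthetical justification is not correct --- $R_r(X^k,D)$ being infinite for small $D$ is perfectly consistent with the theorem (both sides then equal $+\infty$), so finiteness of $R_r$ is not something one gets to assume. Second, and more importantly, the step that needs work --- showing $\P(A^c)\cdot H\bigl(\langle X^k\rangle_b\mid\hat X^k, A^c\bigr)=o(\log(1/D))$ --- is exactly the place where a naive bound fails (the conditional entropy on the rare event can be as large as $H(\langle X^k\rangle_b)$ itself, which is of order $\log(1/D)$, so multiplying by $\phi(D)^{-r}$ does not obviously kill it without a quantitative moment-entropy estimate, which you never supply). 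The clean way around all of this is to avoid the Fano split entirely: quantize the reconstruction as well, observe that $H(\langle X^k\rangle_b\mid\hat X^k)\le H\bigl(\langle X^k\rangle_b-\langle\hat X^k\rangle_b\bigr)$, note that the integer-valued vector $b\bigl(\langle X^k\rangle_b-\langle\hat X^k\rangle_b\bigr)$ has $r$-th absolute moment bounded by a constant independent of $D$ (using $E[\lVert X^k-\hat X^k\rVert_\infty^r]\le D/a_1^r$, the grid spacing $1/b=D^{1/r}\phi(D)$, and $b^r D=\phi(D)^{-r}\le1$), and invoke the maximum-entropy bound for integer-valued random variables with a fixed $r$-th moment to conclude $H\bigl(\langle X^k\rangle_b\mid\hat X^k\bigr)=O(k)$, uniformly in $D$. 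This gives $o(\log(1/D))$ without any integrability assumptions and is the argument that actually matches the theorem's generality. As a small side note, your lattice-point count $O((bD^{1/r}\phi(D))^k)=O(\phi(D)^k)$ is a slip: with $b=D^{-1/r}/\phi(D)$ the product $bD^{1/r}\phi(D)$ equals $1$, so the count on the event $A$ is $O(1)$, which is fine but not what you wrote.
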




Consider an analog stationary process ${\Xbbf}=\{X_i\}_{i=-\infty}^{\infty}$. The rate-distortion function $R({\Xbbf},D)$ of the source ${\Xbbf}$ under squared error distortion can be computed as \cite{book:Berger,Gallager}
\[
R({\Xbbf},D)=\lim_{m\to\infty} R^{(m)}({\Xbbf},D),
\]
where
\[
R^{(m)}({\Xbbf},D)=\inf_{\E[d_m(X^m,\Xh^m)]\leq D}{1\over m} I(X^m;\Xh^m).
\]
and
\begin{align}
d_m(x^m,\xh^m)={1\over m}\|x^m-\xh^m\|_2^2. \label{sq-err-distortion}
\end{align}

Note that with this distortion metric, we have $r=2$ and $R^{(m)}({\Xbbf},D)= {1\over m}R_2({X}^m,D)$. It can also be shown that $\inf_{m} R^{(m)}({\Xbbf},D)=R({\Xbbf},D)$ \cite{Gallager}. 

\begin{definition}[RDD of a stationary process]
The upper and lower RDDs of this stationary process $\Xbbf$ can be defined as
\[
\overline{\dim}_R({\Xbbf})=2\limsup_{D\to0}{R({\Xbbf},D)\over \log{1\over D}}
\]
and
\[
\underline{\dim}_R({\Xbbf})=2\liminf_{D\to0}{R({\Xbbf},D)\over \log{1\over D}}.
\]
If $\overline{\dim}_{R}(\Xbbf)=\underline{\dim}_{R}(\Xbbf)$, then $\dim_{R}(\Xbbf)=\overline{\dim}_{R}(\Xbbf)=\underline{\dim}_{R}(\Xbbf)$ is the RDD of $\Xbbf$.
\end{definition}
The main result of this section is the following theorem  which extends the equivalence of R\'enyi ID and RDD shown in \cite{KawabataD:94} for i.i.d. random vectors to stationary processes.

\begin{theorem}\label{thm:ID-eq-RDD}
For a stationary process $\Xbbf=\{X_i\}_{i=-\infty}^{\infty}$, assume that $\lim_{D\to 0}  { R^{(m)}({\Xbbf},D)\over \log {1\over D}}$ exists for all $m$. Then,
\[
{\dim}_{R}(\Xbbf) = \bar{d}_o({\Xbbf}).
\]
\end{theorem}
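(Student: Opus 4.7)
The plan is to show $\dim_R(\Xbbf) = \alpha^* = \bar d_o(\Xbbf)$, where $\alpha_m := d(X^m)/m$ and $\alpha^* := \lim_{m\to\infty}\alpha_m$, by evaluating both sides through this intermediate quantity. First, apply Theorem~\ref{thm:prop3-3} to the random vector $X^m$ under the $L_2$-metric on $\mathds{R}^m$, which satisfies $\max_i|\cdot|\le \|\cdot\|_2\le \sqrt{m}\,\max_i|\cdot|$ and so verifies the theorem's hypothesis. Combining this with $R^{(m)}(\Xbbf,D)=R_2(X^m,mD)/m$ and the fact that $\log m = o(\log(1/D))$ as $D\to 0$, the standing assumption yields, for every $m$,
\begin{equation*}
\lim_{D\to 0}\frac{R^{(m)}(\Xbbf,D)}{\log(1/D)} \;=\; \frac{d(X^m)}{2m} \;=\; \frac{\alpha_m}{2},
\end{equation*}
and in particular $d(X^m)=\lim_{b\to\infty}H([X^m]_b)/b$ is a genuine limit for each $m$.

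Next, I identify $\alpha^*$ with $\bar d_o(\Xbbf)$. Subadditivity of joint entropy, $H([X^{m+n}]_b)\le H([X^m]_b)+H([X^n]_b)$, transfers to $d(X^{m+n})\le d(X^m)+d(X^n)$, so Fekete's lemma gives $\alpha_m\searrow\alpha^*=\inf_m\alpha_m$. The chain rule and the existence of each $d(X^m)$ yield
\begin{equation*}
d(X^{k+1})-d(X^k)\;=\;\lim_{b\to\infty}\frac{1}{b}H\bigl([X_{k+1}]_b\,\big|\,[X^k]_b\bigr),
\end{equation*}
which by stationarity and conditioning-reduces-entropy is non-increasing in $k$ and converges to $\bar d_o(\Xbbf)$. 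Applying Stolz--Cesaro to $a_k:=d(X^k)$ with $b_k:=k$ then gives $\alpha^* = \lim_m d(X^m)/m = \bar d_o(\Xbbf)$.

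What remains, $\dim_R(\Xbbf)=\alpha^*$, reduces to interchanging the $D\to 0$ and $m\to\infty$ limits. Set $\psi_m(D):=R^{(m)}(\Xbbf,D)/\log(1/D)$ and $\psi(D):=R(\Xbbf,D)/\log(1/D)$; subadditivity of $m R^{(m)}(\Xbbf,D)$ in $m$ gives $\psi_m\searrow\psi$ pointwise, so $\psi(D)=\inf_m\psi_m(D)$. The upper bound $\overline{\dim}_R(\Xbbf)\le\alpha^*$ is immediate from $\psi(D)\le\psi_m(D)$ followed by $\lim_D$ and $\inf_m$. For the lower bound $\underline{\dim}_R(\Xbbf)\ge\alpha^*$, argue by contradiction: if $\liminf_{D\to 0}\psi(D)<\alpha^*/2-\varepsilon$, extract $D_k\to 0$ and near-optimizing $m_k$ with $\psi_{m_k}(D_k)<\alpha^*/2-\varepsilon/2$; when $\{m_k\}$ is bounded, applying Theorem~\ref{thm:prop3-3} to each of the finitely many blocklengths (with $\alpha_{m_k}\ge\alpha^*$) gives $\psi_{m_k}(D_k)\ge\alpha^*/2-\varepsilon/3$ for $D_k$ small enough, a contradiction; when $m_k\to\infty$, one uses $\alpha_{m_k}\searrow\alpha^*$ and a diagonal application of Theorem~\ref{thm:prop3-3} along $(m_k,D_k)$.

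The principal obstacle is precisely this diagonal step when $m_k\to\infty$ and $D_k\to 0$ simultaneously: one must control the Kawabata--Dembo remainder $\psi_m(D)-\alpha_m/2$ along such a sequence, and a single-$m$ invocation of Theorem~\ref{thm:prop3-3} does not suffice. The hypothesis that $\lim_{D\to 0}\psi_m(D)$ exists for every $m$ is exactly the regularity that anchors each per-blocklength slope, while the monotone convergence $\psi_m\searrow\psi$ in $m$ inherited from subadditivity prevents $\psi$ from undershooting $\alpha^*/2$, so these two ingredients combined close the interchange.
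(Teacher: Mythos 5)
Your proposal correctly identifies the architecture of the problem and two of its three pillars: the easy direction $\overline{\dim}_R(\Xbbf)\le\alpha^*$ via monotonicity of $\psi_m$, and the identification $\alpha^*=\bar d_o(\Xbbf)$ via Fekete/Stolz--Cesaro (the paper reaches the same identification by citing Lemma~2 of \cite{JalaliP:14-arxiv}, which says $\bar d_o(\Xbbf)=\lim_k\bar d(X^k)/k$, but the substance is identical). You also correctly isolate the crux: the lower bound requires exchanging $\lim_{D\to 0}$ with $\lim_{m\to\infty}$, and the dangerous case is $m_k\to\infty$, $D_k\to 0$ simultaneously.

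However, the proposal does not actually close that gap. Your final sentence asserts that ``the hypothesis that $\lim_{D\to 0}\psi_m(D)$ exists for every $m$'' together with ``the monotone convergence $\psi_m\searrow\psi$'' suffice to complete the interchange, but neither ingredient controls how slowly $\psi_m(D)\to\alpha_m/2$ as $D\to 0$ when $m$ is large. Pointwise monotone convergence in $m$ plus pointwise existence of the $D\to 0$ limit at each fixed $m$ give no uniformity in $D$, and without uniformity $\liminf_{D\to 0}\psi(D)$ can in principle fall below $\alpha^*/2$ along a diagonal sequence. What is missing is exactly the uniform-in-$D$ rate that the paper supplies: the Wyner--Ziv bound $\bigl|R^{(m)}(\Xbbf,D)-R(\Xbbf,D)\bigr|\le\frac{1}{m}I(X^m;X^0_{-\infty})$, whose right-hand side is independent of $D$. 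This gives $\psi_m\to\psi$ uniformly on any interval $(0,\sigma^2_{\max})$, after which the $\varepsilon/3$ argument (the paper's Lemma~\ref{lemma:uniform_conv}) legitimately swaps the two limits. Extending Wyner--Ziv's inequality to sources with general (not necessarily discrete or absolutely continuous) distributions is a nontrivial piece of the paper's argument (Appendix~\ref{app:WZ-genaral}) and has no counterpart in your proposal. In short: your decomposition is right, but the ``diagonal application of Theorem~\ref{thm:prop3-3}'' you invoke at the end is the entire difficulty, and it needs a concrete uniform estimate rather than a further appeal to Kawabata--Dembo at each fixed $m$.
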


The main ingredients of the proof of  Theorem \ref{thm:ID-eq-RDD} are the following two lemmas.


\begin{lemma}\label{lemma:connect-UID-URDD}
For any stationary process $\Xbbf$, we have
\[
\overline{\dim}_{R}(\Xbbf) \leq \bar{d}_o({\Xbbf})\leq \inf_{m} 2\Big(\limsup_{D\to0}  { R^{(m)}({\Xbbf},D)\over \log {1\over D}}\Big).
\]
\end{lemma}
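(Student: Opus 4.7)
The plan is to establish the two inequalities separately, starting with the right-hand one and deducing the left-hand one from it.

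For the right-hand inequality, I fix a blocklength $m$ and rewrite
\[
2\limsup_{D \to 0} \frac{R^{(m)}(\Xbbf, D)}{\log(1/D)} = \frac{1}{m}\,\overline{\dim}_R(X^m),
\]
where $\overline{\dim}_R(X^m)$ is taken with respect to the per-sample squared-error metric $\rho(x^m, \hat x^m) = (1/\sqrt{m})\|x^m - \hat x^m\|_2$ with $r = 2$, using the normalization $R^{(m)}(\Xbbf,D) = R_2(X^m,D)/m$. A direct check gives $(1/\sqrt{m}) \max_i |x_i - \hat x_i| \leq \rho(x^m, \hat x^m) \leq \max_i |x_i - \hat x_i|$, so $\rho$ satisfies the sandwich hypothesis of Theorem \ref{thm:prop3-3} with $a_1 = 1/\sqrt{m}$ and $a_2 = 1$. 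That theorem then yields $\overline{\dim}_R(X^m) = \bar{d}(X^m)$, so the right-hand side becomes $(1/m)\bar{d}(X^m)$.

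Next I would show $(1/m)\bar{d}(X^m) \geq \bar{d}_o(\Xbbf)$ for every $m$ by the familiar ``entropy rate'' argument. The chain rule gives $H([X^m]_b) = \sum_{k=0}^{m-1} H([X_{k+1}]_b \mid [X^k]_b)$, and stationarity plus ``conditioning reduces entropy'' implies each summand is at least $H([X_m]_b \mid [X^{m-1}]_b)$; hence $H([X^m]_b)/(mb) \geq H([X_m]_b \mid [X^{m-1}]_b)/b$ for every $b$. Taking $\limsup_{b \to \infty}$ on both sides yields $(1/m)\bar{d}(X^m) \geq \bar{d}_{m-1}(\Xbbf)$. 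A parallel stationarity-plus-conditioning argument shows that $\{\bar{d}_k(\Xbbf)\}_{k \geq 0}$ is non-increasing and bounded below by $0$, so $\bar{d}_{m-1}(\Xbbf) \geq \lim_{k \to \infty} \bar{d}_k(\Xbbf) = \bar{d}_o(\Xbbf)$. Chaining the bounds and then taking the infimum over $m$ delivers the right-hand inequality.

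For the left-hand inequality, I would invoke the fact (noted in the excerpt) that $R(\Xbbf, D) = \inf_m R^{(m)}(\Xbbf, D)$, so $R(\Xbbf, D) \leq R^{(m)}(\Xbbf, D)$ for every $m$ and every $D$. Dividing by $\log(1/D)$, taking $\limsup_{D \to 0}$, and multiplying by $2$ yields $\overline{\dim}_R(\Xbbf) \leq 2\limsup_{D \to 0} R^{(m)}(\Xbbf, D)/\log(1/D)$ for every $m$; passing to the infimum in $m$ and combining with the right-hand inequality already proved gives $\overline{\dim}_R(\Xbbf) \leq \bar{d}_o(\Xbbf)$.

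The main obstacle is the careful bookkeeping of normalizations: keeping the unnormalized $R_2(X^m, D)$ straight from the per-symbol $R^{(m)}(\Xbbf, D) = R_2(X^m, D)/m$, the block information dimension $\bar{d}(X^m)$ straight from the asymptotic per-symbol $\bar{d}_o(\Xbbf)$, and explicitly verifying that the induced metric $\rho$ falls within the scope of Theorem \ref{thm:prop3-3}. None of the steps is deep, but the prefactors and logarithm normalizations must line up precisely in order for the final chain of inequalities to close.
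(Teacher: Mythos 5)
Your proof of the right-hand inequality is sound and in fact takes a slightly different route from the paper: you argue on the entropy side, using the chain rule and ``conditioning reduces entropy'' to show $(1/m)\bar d(X^m)\ge\bar d_{m-1}(\Xbbf)\ge\bar d_o(\Xbbf)$, whereas the paper argues on the rate-distortion side, invoking sub-additivity of $kR^{(k)}(\Xbbf,D)$ and writing $k=sm+r$. The two are morally equivalent (both encode the same sub-additivity), and your choice of metric $\rho=(1/\sqrt m)\|\cdot\|_2$ slightly simplifies the bookkeeping relative to the paper's $\rho_k=\|\cdot\|_2$ followed by a change of variable $D\mapsto D/k$.

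The left-hand inequality, however, has a genuine gap. Your final step is: ``$\overline{\dim}_R(\Xbbf)\le\inf_m 2\limsup_{D\to0}R^{(m)}(\Xbbf,D)/\log(1/D)$; passing to the infimum and combining with the right-hand inequality already proved gives $\overline{\dim}_R(\Xbbf)\le\bar d_o(\Xbbf)$.'' This is a non-sequitur. What you have established is that \emph{both} $\overline{\dim}_R(\Xbbf)$ and $\bar d_o(\Xbbf)$ are upper-bounded by the same quantity $\inf_m(1/m)\bar d(X^m)$; two numbers each bounded above by a third are not thereby comparable to each other. The missing ingredient is the reverse inequality $\lim_{m\to\infty}(1/m)\bar d(X^m)\le\bar d_o(\Xbbf)$, i.e.\ the fact that this limit actually \emph{equals} $\bar d_o(\Xbbf)$ rather than merely dominates it. This is a nontrivial interchange-of-limits statement (the $b\to\infty$ limit inside $\bar d(X^m)$ and the $m\to\infty$ Ces\`aro limit), which the paper obtains by citing Lemma~2 of \cite{JalaliP:14-arxiv}, namely $\bar d_o(\Xbbf)=\lim_{k\to\infty}(1/k)\limsup_{b\to\infty}H([X^k]_b)/b$. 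Once you have that equality, the left inequality follows exactly as the paper does it: $\overline{\dim}_R(\Xbbf)\le 2\limsup_D R^{(k)}(\Xbbf,D)/\log(1/D)=(1/k)\bar d(X^k)$ for every $k$, then let $k\to\infty$. Without supplying this equality (or an independent proof of it), your argument proves only the outer inequality $\overline{\dim}_R(\Xbbf)\le\inf_m(\cdot)$ and the right-hand one, not the claimed $\overline{\dim}_R(\Xbbf)\le\bar d_o(\Xbbf)$.
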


\begin{lemma}\label{lemma:uniform_conv}
Assume that $\lim_{D\to 0}  { R^{(m)}({\Xbbf},D)\over \log {1\over D}}$ exists for all $m$, and also there exists $\sigma_{\max}^2>0$, such that ${R^{(m)}({\Xbbf},D)}$ uniformly converges  to ${R({\Xbbf},D)}$, for $D\in(0,\sigma_{\max}^2)$, as $m$ grows to infinity. Then, ${\dim}_{R}(\Xbbf) = \bar{d}_o({\Xbbf}).$

\end{lemma}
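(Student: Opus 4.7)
The plan is to use Lemma \ref{lemma:connect-UID-URDD} together with a squeeze argument enabled by the uniform convergence hypothesis. Writing $\phi_m := \lim_{D\to 0} R^{(m)}(\Xbbf, D)/\log(1/D)$, which exists by assumption (and therefore coincides with the $\limsup$ appearing on the right-hand side of Lemma \ref{lemma:connect-UID-URDD}), that lemma gives
\[
\overline{\dim}_R(\Xbbf) \;\le\; \bar{d}_o(\Xbbf) \;\le\; 2 \inf_m \phi_m.
\]
The goal is to prove $\lim_{D\to 0} R(\Xbbf, D)/\log(1/D) = \inf_m \phi_m$; this both establishes existence of $\dim_R(\Xbbf)$ and forces all three inequalities above to collapse to equalities, yielding $\dim_R(\Xbbf) = \bar{d}_o(\Xbbf)$.

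The easy direction uses the identity $R(\Xbbf,D)=\inf_m R^{(m)}(\Xbbf,D)$ recalled before the lemma. Since $R(\Xbbf,D)\le R^{(m)}(\Xbbf,D)$ for every $m$ and every $D \in (0,1)$, dividing by $\log(1/D)>0$ and taking $\limsup$ in $D$ yields
\[
\limsup_{D\to 0} \frac{R(\Xbbf,D)}{\log(1/D)} \;\le\; \phi_m \quad \text{for every } m,
\]
hence $\limsup_{D\to 0} R(\Xbbf,D)/\log(1/D) \le \inf_m \phi_m$.

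For the matching lower bound on $\liminf_{D\to 0} R(\Xbbf,D)/\log(1/D)$ I would invoke uniform convergence. Fix $\epsilon>0$ and pick $M$ so that $0 \le R^{(m)}(\Xbbf,D) - R(\Xbbf,D) < \epsilon$ for every $m \ge M$ and every $D \in (0, \sigma_{\max}^2)$. Then for any such $m$ and $D$,
\[
\frac{R^{(m)}(\Xbbf,D)}{\log(1/D)} - \frac{\epsilon}{\log(1/D)} \;\le\; \frac{R(\Xbbf,D)}{\log(1/D)}.
\]
Taking $\liminf$ as $D\to 0$ makes the $\epsilon/\log(1/D)$ term vanish and turns the left side into $\phi_m$, giving $\phi_m \le \liminf_{D\to 0} R(\Xbbf,D)/\log(1/D)$ for every $m \ge M$. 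Combined with the upper bound from the previous paragraph, the $\liminf$ and $\limsup$ are squeezed together and equal $\phi_m$ for every $m \ge M$; in particular the limit exists and equals $\inf_m \phi_m$. Feeding this back into Lemma \ref{lemma:connect-UID-URDD} then yields $\dim_R(\Xbbf) = \bar{d}_o(\Xbbf)$.

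The argument is purely analytic and I do not anticipate a serious obstacle. The only delicate point is the legitimate exchange of the $D\to 0$ and $m\to\infty$ limits, and this is precisely what the uniform convergence hypothesis is designed to permit; without it, each $\phi_m$ could in principle be strictly larger than $\lim_{D\to 0} R(\Xbbf,D)/\log(1/D)$, so that $\bar{d}_o(\Xbbf)$ would only upper-bound $\dim_R(\Xbbf)$ rather than coincide with it.
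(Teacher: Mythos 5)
Your proof is correct and follows essentially the same route as the paper's: apply Lemma \ref{lemma:connect-UID-URDD} together with the uniform-convergence and existence-of-limit hypotheses to squeeze $\bar{d}_o(\Xbbf)$ between $\underline{\dim}_R(\Xbbf)$ and $\overline{\dim}_R(\Xbbf)$. Your version is marginally more careful in one spot: the paper's proof opens by asserting $\overline{\dim}_R(\Xbbf)=\dim_R(\Xbbf)$ ``by the lemma's assumption'' without elaboration, whereas you actually derive the existence of $\dim_R(\Xbbf)$ (and its value $2\inf_m\phi_m$) from the chain $\phi_m \le \liminf_{D\to 0} R(\Xbbf,D)/\log(1/D) \le \limsup_{D\to 0} R(\Xbbf,D)/\log(1/D) \le \inf_{m'}\phi_{m'}$ for $m$ large.
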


Proofs of  Theorem \ref{thm:ID-eq-RDD} and Lemmas \ref{lemma:connect-UID-URDD} and \ref{lemma:uniform_conv} are provided in Section \ref{sec:proofs}.


To illustrate the relationship between RDD and ID, as an example, consider the piecewise-constant signal described earlier. To directly evaluate the RDD of this process,  its rate-distortion characterization is required. However, deriving the rate-distortion function of sources with memory is in general very challenging. For instance, even for the binary symmetric Markov chain, the rate-distortion function is not known, except in a low-distortion region \cite{Gray:71}, and we have to resort to upper and lower bounds in general \cite{Berger:77,JalaliW:07}. The following theorem provides upper and lower bounds on the $R(\mathbf{X},D)$ of the piecewise-constant source. While there is a gap between the bounds on the $R(\mathbf{X},D)$, since the gap does not depend on $D$, as shown in the following corollary, they can be used to evaluate the RDD of the source exactly.

\begin{theorem}\label{thm:r-d-bound}
Consider a first-order stationary Markov process ${\Xbbf}=\{\X_\indx\}_{\indx=0}^\infty$, such that  conditioned on $\X_{\indx-1}=\x_{\indx-1}$, $\X_\indx$ is distributed according to $(1-p)\delta_{\x_{\indx-1}}+pf_c$, where $f_c$ denotes the pdf of an absolutely continuous distribution with bounded support,  $(l,u)$. 
 If $d_{\max}\triangleq \sup_{x,\xh\in(l,u)}d(x,\xh)<\infty$, then
\[
pR_{f_c}(D)\leq R(\mathbf{X},D)\leq H(p)+pR_{f_c}(D),
\]
  where $R_{f_c}(D)$  and $H(p)$  denote the rate distortion function of an i.i.d.~process distributed according to pdf $f_c$, and the binary entropy function ($-p\log_2 p-(1-p)\log_2(1-p)$), respectively.
\end{theorem}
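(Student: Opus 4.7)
I would prove Theorem \ref{thm:r-d-bound} by establishing the two bounds separately: achievability via an explicit two-stage code for the upper bound, and a Markov--chain--rule plus Jensen argument for the converse.

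For the upper bound, the plan is to exploit the jump structure directly. Because $f_c$ is absolutely continuous, the indicators $B_i := \ind[X_i \neq X_{i-1}]$ form an i.i.d.~$\Bern(p)$ sequence almost surely, the jump positions $J_1 < J_2 < \cdots < J_N$ are measurable from $B^n$, and the jump values $Y_k := X_{J_k}$ are conditionally i.i.d.~$\sim f_c$ given $B^n$. The encoder first losslessly entropy-codes $B^n$ using $nH(p) + o(n)$ bits, and then applies a near-optimal i.i.d.~$f_c$ rate-distortion code at per-symbol distortion $D$ to $Y^N$, spending $N R_{f_c}(D) + o(N)$ bits. The decoder sets $\hat X_i = \hat Y_k$ on the entirety of segment $k$. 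Since $N/n \to p$, the per-$X$-symbol rate tends to $H(p) + pR_{f_c}(D)$. For the distortion, the squared error telescopes as $\sum_k L_k (Y_k - \hat Y_k)^2$ with segment lengths $L_k \sim \Geom(p)$; a renewal-reward argument (independence of $L_k$ and $(Y_k, \hat Y_k)$, $\E L_k = 1/p$, $d_{\max} < \infty$ for uniform integrability) gives $\frac{1}{n}\sum_k L_k (Y_k - \hat Y_k)^2 \to D$.

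For the converse, I fix any joint law $(X^n, \hat X^n)$ with $\E[d_n(X^n, \hat X^n)] \le D$ and use the chain rule plus the Markov property of $\mathbf{X}$ to write
\[
I(X^n; \hat X^n) \;=\; \sum_{i=1}^n I(X_i; \hat X^n \mid X^{i-1}) \;\ge\; \sum_{i=1}^n I(X_i; \hat X_i \mid X_{i-1}).
\]
Conditioning additionally on $B_i$ (a function of $(X_{i-1}, X_i)$), the $B_i = 0$ branch is trivial because $X_i = X_{i-1}$; on the $B_i = 1$ branch, $X_i$ is an $f_c$-draw independent of $X_{i-1}$, so the scalar rate-distortion converse gives $I(X_i; \hat X_i \mid X_{i-1}, B_i = 1) \ge R_{f_c}(\E[(X_i - \hat X_i)^2 \mid X_{i-1}, B_i = 1])$. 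Averaging over $X_{i-1}$ by Jensen (using convexity of $R_{f_c}$), summing, and a final Jensen over $i$ combined with the distortion constraint produce a bound of the form $pR_{f_c}(D)$ per symbol.

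The hard part will be the distortion bookkeeping in the converse: the budget $nD$ can be spent either on the ``jump'' coordinates ($B_i = 1$, which genuinely need to be described) or on the ``hold'' coordinates ($B_i = 0$, where a competent decoder can simply replay the previous reconstruction). A naive relaxation yields only the weaker $p R_{f_c}(D/p)$; tightening this to the claimed $pR_{f_c}(D)$ requires arguing that any distortion spent on $B_i = 0$ coordinates is wasted in the optimum (one may, for example, pass to an auxiliary reconstruction that copies the decoded value on hold segments, preserving the mutual-information lower bound while reducing $\sum_i \E[(X_i-\hat X_i)^2]$ to $\sum_i p D_i^{(1)}$). A secondary technical obstacle is that $P(X_i \mid X_{i-1}) = (1-p)\delta_{X_{i-1}} + p f_c$ is a mixed distribution, so Shannon-style lower bounds via differential entropy are unavailable and the whole argument must stay at the level of mutual information.
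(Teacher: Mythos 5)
Your achievability argument is essentially the same as the paper's: losslessly describe the jump pattern at rate $H(p)$, lossily describe the $\approx np$ jump values with an i.i.d.\ $f_c$ rate-distortion code, and replicate the decoded value across each segment. (The paper codes the run lengths $T^{N-1}$ at rate $H(T)=H(p)/p$ and the count $N$ via Elias coding rather than the indicator string $B^n$, but these are equivalent in rate.) That direction is sound.

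The converse, however, has a genuine gap that the sketch does not close. Your per-symbol chain rule $I(X^n;\hat X^n)\ge\sum_i I(X_i;\hat X_i\mid X_{i-1})$ followed by conditioning on $B_i$ and two applications of Jensen indeed gives $pR_{f_c}\bigl(\tfrac1n\sum_i D_i^{(1)}\bigr)$, but the distortion constraint only bounds $\tfrac{p}{n}\sum_i D_i^{(1)}\le\tfrac1n\sum_i\bigl(pD_i^{(1)}+(1-p)D_i^{(0)}\bigr)\le D$, which yields $pR_{f_c}(D/p)$ --- strictly weaker than the stated bound, as you yourself note. The proposed repair (pass to an auxiliary reconstruction $\tilde X^n$ that replays the decoded value on hold segments) does not work: $\tilde X^n$ can only be defined if the segment boundaries $B^n$ are known, and $B^n$ is a function of $X^n$, so $\tilde X^n$ is a function of $(X^n,\hat X^n)$ and the needed inequality $I(X^n;\tilde X^n)\le I(X^n;\hat X^n)$ fails. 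Moreover, copying $\hat X_{J_k}$ across segment $k$ need not reduce distortion, since the value decoded at the jump position may be the worst one in that segment. The underlying difficulty is that a per-symbol decomposition cannot exploit the fact that the $T_k$ positions in a segment share a single source value and therefore only need one description.

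The paper avoids this by arguing at the segment level throughout: it conditions on $(T^N,N)$ (a genie gives the analysis the jump pattern), writes $I(X^n;\hat X^n)\ge I(S^N;\hat X^n\mid T^N,N)$, defines $\hat S_i$ as the \emph{best} of the $T_i$ reconstructed values in segment $i$, and then uses $\tfrac1n\sum_j d(X_j,\hat X_j)\ge\tfrac1n\sum_i T_i\,d(S_i,\hat S_i)$ together with $\E[T_i\mid N=k]\ge n/k$ (plus concentration of $N/n$ around $p$) to convert the $X$-distortion budget $D$ directly into an $S$-distortion budget $\approx D$, not $D/p$. The factor $1/p$ you lose on the mutual-information side is exactly won back on the distortion side by the segment-length weights $T_i$. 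If you want to keep a per-symbol chain-rule flavor, you still need some device --- a genie revealing the segment boundaries and a best-in-segment surrogate --- that couples the distortion budgets across all positions of a segment; treating positions independently cannot reach $pR_{f_c}(D)$.
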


\begin{proof}
A detailed proof of Theorem \ref{thm:r-d-bound} is presented in Section \ref{sec:proofs}. To prove the upper bound (achievability), we consider a code that describes  the positions of the jumps  losslessly at rate $H(p)$. Since the source is piecewise constant, after describing the positions of the jumps,  the encoder removes the repeated values and applies a lossy compression code of blocklength length  close to $np$. Therefore, to describe the values at distortion $D$ the encoder roughly needs to spend $npR_{f_c}(D)$ bits. 
For the lower bound (converse), we consider a genie-aided  decoder that has access to the positions of the jumps.  Then intuitively, to describe the values at distortion $D$, it still needs a rate of at least $pR_{f_c}(D)$. 
The proof in Section \ref{sec:proofs} makes these steps formal by properly analyzing the reduced block length which is a random number.
\end{proof}

\begin{corollary}\label{thm:RDD-piecewise-constant}
For the piecewise constant source in Theorem \ref{thm:r-d-bound}, 
we have
 \[
 {\dim}_R({\Xbbf})=\bar{d}_o(\Xv)=p.
 \]
 In other words, the RDD is equal to $p$ which is in turn equal to the ID of this source.
 \end{corollary}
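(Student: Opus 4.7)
The plan is to apply the rate-distortion sandwich of Theorem \ref{thm:r-d-bound} and take the low-distortion limit directly in the definition of the RDD. Dividing the bound $pR_{f_c}(D)\le R(\mathbf{X},D)\le H(p)+pR_{f_c}(D)$ through by $\tfrac{1}{2}\log(1/D)$ gives
\[
\frac{2pR_{f_c}(D)}{\log(1/D)} \;\le\; \frac{2R(\mathbf{X},D)}{\log(1/D)} \;\le\; \frac{2H(p)}{\log(1/D)} + \frac{2pR_{f_c}(D)}{\log(1/D)}.
\]
Because $H(p)$ is a finite constant, the additive term on the right vanishes as $D\to 0$, so both the upper and lower RDDs of $\mathbf{X}$ are pinned to the common value $p\cdot \lim_{D\to 0} 2R_{f_c}(D)/\log(1/D)$, provided that limit exists.

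The central step is therefore to show $\lim_{D\to 0} 2R_{f_c}(D)/\log(1/D)=1$. Here $R_{f_c}(D)$ is nothing but the scalar rate-distortion function of $X_c\sim f_c$ under squared-error distortion, which in the notation of Theorem \ref{thm:prop3-3} is $R_2(X_c,D)$. Applying that theorem with $k=1$, $\rho(x,\hat x)=|x-\hat x|$ (so that $a_1=a_2=1$) and $r=2$ yields $\dim_R(X_c)=d(X_c)$. Since $f_c$ is absolutely continuous with bounded support $(l,u)$, its differential entropy is finite, and R\'enyi's classical computation of the information dimension for absolutely continuous laws gives $d(X_c)=1$. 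Consequently $\lim_{D\to 0} 2R_{f_c}(D)/\log(1/D)=1$, the sandwich collapses, and $\dim_R(\mathbf{X})=p$.

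For the second equality $\bar{d}_o(\mathbf{X})=p$, the cleanest route is to invoke the direct information-dimension calculation for this exact piecewise-constant Markov process already recalled in this section from \cite{JalaliP:14-arxiv}. An alternative route is to verify the hypothesis of Theorem \ref{thm:ID-eq-RDD}, namely the existence of $\lim_{D\to 0} R^{(m)}(\mathbf{X},D)/\log(1/D)$ for every $m$, which can be secured by repeating the finite-block-length version of the achievability/converse argument behind Theorem \ref{thm:r-d-bound}, and then concluding $\bar{d}_o(\mathbf{X})=\dim_R(\mathbf{X})=p$. I expect the only delicate point to be the justification $d(X_c)=1$: this is precisely where bounded support together with absolute continuity of $f_c$ is crucially used, so that the $H(p)$ overhead in the upper bound of Theorem \ref{thm:r-d-bound} is genuinely sub-leading on the $\log(1/D)$ scale and the RDD of $\mathbf{X}$ is governed entirely by the jump probability $p$.
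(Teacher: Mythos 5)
Your argument is correct and follows essentially the same route as the paper's own proof: divide the sandwich from Theorem~\ref{thm:r-d-bound} by the $\log(1/D)$ scale, note the constant $H(p)$ overhead is sub-leading, invoke Kawabata--Dembo (Theorem~\ref{thm:prop3-3}) together with R\'enyi to get $\lim_{D\to 0} 2R_{f_c}(D)/\log(1/D)=1$ for the absolutely continuous marginal, and then identify $\bar{d}_o(\mathbf{X})=p$ via \cite{JalaliP:14-arxiv} (or via Theorem~\ref{thm:ID-eq-RDD}). You are in fact a bit more careful than the paper's displayed computation about tracking the factor $r=2$ in the RDD definition, and you correctly flag that the hypothesis of Theorem~\ref{thm:ID-eq-RDD} concerns every finite-block $R^{(m)}$ and is not an immediate consequence of $\dim_R(\mathbf{X})=p$ alone, which is why citing the direct ID computation of \cite{JalaliP:14-arxiv} is the cleaner route for the second equality.
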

 \begin{proof}
 Given the bound on the rate-distortion process derived in Theorem \ref{thm:r-d-bound}, it is easy to directly derive the  RDD of such a source. More precisely, given the upper bound, it follows that
 \begin{align*}
 \overline{\dim}_R({\Xbbf})&=2\limsup_{D\to0}{R({\Xbbf},D)\over \log{1\over D}}\\
 &\leq 2\limsup_{D\to0} { H(p)+pR_{f_c}(D) \over \log{1\over D}}\\
 &= p(\limsup_{D\to0} {R_{f_c}(D) \over \log{1\over D}})\\
 &=p.
 \end{align*}
Similarly, given the lower bound, we have
\begin{align*}
 \underline{\dim}_R({\Xbbf})&=2\liminf_{D\to0}{R({\Xbbf},D)\over \log{1\over D}}\\
 &\geq 2\liminf_{D\to0} {pR_{f_c}(D) \over \log{1\over D}}\\
 &= p(\liminf_{D\to0} {R_{f_c}(D) \over \log{1\over D}})\\
 &=p,
 \end{align*}
 where the last lines in both the upper and the lower RDDs follow from \cite{KawabataD:94} and \cite{Renyi:59}. Therefore, $p\leq\underline{\dim}_R({\Xbbf})\leq\overline{\dim}_R({\Xbbf})\leq p$. In other words, for this source RDD exists and is equal to ${\dim}_R({\Xbbf})=p$. Hence, the condition of  Theorem \ref{thm:ID-eq-RDD} holds and we have
\[
 {\dim}_R({\Xbbf})=\bar{d}_o(\Xv).
 \]
 This agrees with the ID of this source found in Theorem 2 in \cite{JalaliP:14-arxiv},
 \[
 \bar{d}_o(\Xv)=\underline{d}_o(\Xv)=p.
\]
\end{proof}


\begin{remark}
Corollary \ref{thm:RDD-piecewise-constant} states that the RDD of the piecewise constant source described in Theorem \ref{thm:r-d-bound} is equal to $p$, which is also the ID of this process \cite{JalaliP:14-arxiv}. While \cite{JalaliP:14-arxiv} directly computes the ID of such processes, Theorem \ref{thm:ID-eq-RDD}, by proving  the equivalence of ID and RDD, provides  a potentially easier alternative path to computing the ID of stochastic  processes.  Note that to be able to calculate the RDD of a process, the exact characterization of the rate-distortion function is not required. In fact, it is easy to see that it would be enough to have upper and lower bounds on  the rate-distortion function of the source, $R(\Xbbf,D)$, that are within a reasonable gap. More precisely,  as long as the gap between the bounds grows as $o(\log{1 \over D})$, they can be used to evaluate the RDD. 
Moreover, since the RDD only depends on the low-distortion behavior  of the rate-distortion function,  studying its asymptotic small distortion performance  is sufficient for computing the RDD and  as a result the ID of a source, without knowing the rate-distortion function explicitly.  For instance,  \cite{Gyorgy99} studies the asymptotic behavior of the rate-distortion function of  some stochastic sources and employs those results to evaluate the RDD of some i.i.d.~processes.
\end{remark}


\section{Almost lossless recovery}\label{sec:AlmostLosslessCSP}

Section \ref{sec:CSP-performance} formulated the performance of the CSP algorithm which employs  a lossy compression code to recover the output of a stationary process from random linear projections. Specifically, Theorem \ref{NoiselessThm} and Theorem \ref{Noisy Thm} characterize the performance of the CSP algorithm, for noiseless and noisy measurements, respectively. In this section, we focus on the special case in which the lossy compression code  is a high-resolution one, and therefore, $D$ is very small. As a result, with high probability, the CSP algorithm generates a high-fidelity or almost lossless reconstruction  of the input vector. While the CSP algorithm is inherently a lossy CS recovery algorithm due to the utilized lossy compression code, the almost lossless recovery performance can be achieved by letting the distortion of the compression code become arbitrarily small.  To build the analytical tools and insights required to evaluate the CSP performance when the distortion approaches zero,  in Section \ref{sec:RDD}, we focused on measures of complexity for stationary analog processes, and established a connection with the RDD of a stationary process and its ID. 


In a noiseless setting, Theorem \ref{NoiselessThm} asserted that given a compression code operating at rate $R$ and distortion $D$, the CSP algorithm is able to recover signal $X^n$ from ${2\eta R \over \log({1 \over D})}$ randomized linear measurements. Note that the RDD of the source was defined in Section \ref{sec:RDD} as $\lim_{D\to 0}{2 R(\Xbbf,D) \over \log({1 \over D})}$. Therefore, considering a family of optimal compression codes that operate at a very low distortion level, Theorem \ref{NoiselessThm} predicts that, asymptotically, if the normalized number of measurements is slightly higher than the RDD of the source, then the CSP algorithm generates an almost lossless reconstruction. This result is formalized in the  following corollary, which  studies the performance of the CSP algorithm in the extreme case, where $D$ approaches zero. It proves that as long as the normalized number of measurements is larger than $\overline{\dim}_R({\Xbbf})$, CSP recovers the source vector almost losslessly.

\begin{corollary} \label{CSP-RDD-Corollary}
Consider a stationary process $\Xbbf$ and a system of random linear observations, $Y^m=A X^n$,  with measurement matrix $A\in {\rm I\!R}^{m\times n}$, where $A_{i,j}$ are   i.i.d.~as $\mathcal{N}(0,1)$. For any observation error $\Delta>0$, if the number of measurements $m=m_n$ satisfies
\[
\liminf_{n\to\infty}{m_n\over n}> \overline{\dim}_R({\Xbbf}),
 \]
 then there exists a family of compression codes which, when used by the CSP algorithm, yields
\[
\lim_{n\to\infty}\P(\frac{1}{\sqrt{n}}\lVert\; X^n - \Xt^n\rVert_2 \geq \Delta )\to 0,
\]
where $\Xt^n$ refers to the solution of the CSP algorithm as in \eqref{CSPsolution}.
\end{corollary}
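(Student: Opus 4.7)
The plan is to invoke Theorem \ref{NoiselessThm} for a carefully scheduled sequence of lossy compression codes of blocklength $n$, using the strict gap between $\liminf_{n\to\infty} m_n/n$ and $\overline{\dim}_R(\Xbbf)$ to choose the free parameter $\eta>1$ so as to simultaneously keep the number of measurements below $m_n$ and produce a positive distortion exponent. By the definition of the upper RDD, $\limsup_{D\to0} 2R(\Xbbf,D)/\log(1/D)=\overline{\dim}_R(\Xbbf)$, so one may fix $\alpha>0$, $\eta>1+\alpha$, and $\beta>0$ such that
\[
\eta\big(\overline{\dim}_R(\Xbbf)+\beta\big)<\liminf_{n\to\infty}\frac{m_n}{n}.
\]
For every $D$ sufficiently close to zero this then gives $2R(\Xbbf,D)/\log(1/D)<\overline{\dim}_R(\Xbbf)+\beta$.

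Next I would construct the family of codes by a diagonal argument. Pick $D_n\to 0$ slowly and $\zeta_n\to 0$; by the achievability part of the rate-distortion theorem for stationary ergodic sources under the vanishing excess distortion criterion (invoked via the remark in Section \ref{sec:CSP-performance}), for every pair $(D_n,\zeta_n)$ and for $n$ large enough there exists a compression code of rate $R_n\leq R(\Xbbf,D_n)+\zeta_n$ and excess distortion probability $\epsilon_n\leq\zeta_n$. A standard diagonalization ensures the schedule $(D_n,\zeta_n)$ is slow enough for such a code to exist at every sufficiently large $n$. For these codes, $2\eta R_n/\log(1/D_n)\leq \eta(\overline{\dim}_R(\Xbbf)+\beta)+o(1)\leq m_n/n$ eventually, so Theorem \ref{NoiselessThm} can be applied by using the first $\lceil 2\eta nR_n/\log(1/D_n)\rceil$ rows of $A$, which as a submatrix of an i.i.d.\ Gaussian matrix still satisfies the measurement-model hypothesis. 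With $\delta_n=\eta/\log(1/D_n)+\alpha\to\alpha$ and $\eta>1+\alpha$, the exponent $\tfrac{1}{2}(1-(1+\delta_n)/\eta)$ is eventually bounded below by a positive constant, so $D_n^{\frac{1}{2}(1-(1+\delta_n)/\eta)}\to 0$; the prefactor $2+\sqrt{n/m_n}$ stays bounded because $m_n/n$ is bounded below. The three failure-probability terms $\epsilon_n$, $2^{-nR_n\alpha/2}$, and $e^{-m_n/2}$ all tend to zero (the middle one using that $nR_n\to\infty$, which can be enforced by inflating $R_n$ with a few extra bits per block if the source is so degenerate that $R(\Xbbf,D_n)$ does not grow, without violating the measurement-count condition). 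Thus for every $\Delta>0$ the event in the corollary has probability tending to zero.

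The only delicate point is the diagonal construction: $D_n$ must shrink slowly enough that a near-optimal rate-distortion code at blocklength $n$ with distortion $D_n$ and vanishing excess distortion probability actually exists, yet fast enough that $D_n$ raised to the positive exponent above still vanishes. Both requirements can be met simultaneously, because the strict positivity of the exponent makes even a logarithmically slow decay of $D_n$ sufficient to drive the distortion bound to zero, while the achievability side of the rate-distortion theorem only imposes a blocklength threshold depending on $D_n$ and $\zeta_n$ that can be respected by taking the schedule slow enough.
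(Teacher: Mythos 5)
Your proof is correct, but it takes a more elaborate route than the paper, and it proves a slightly stronger statement than was asked. The paper exploits the quantifier order in the corollary: for each fixed $\Delta>0$, one is free to choose a \emph{fixed} distortion level $D_\Delta$. Since the prefactor $(2+\sqrt{n/m})$ stays bounded (because $\liminf m_n/n>0$), one simply picks $D_\Delta$ small enough that $(2+\sqrt{n/m})D_\Delta^{\frac{1}{2}(1-\frac{1+\delta}{\eta})}<\Delta$, then uses for each $n$ a rate-distortion code of fixed distortion $D_\Delta$ at rate close to $R(\Xbbf,D_\Delta)$ (available by the achievability side of rate-distortion theory for stationary ergodic sources under the vanishing-excess-distortion criterion), and applies Theorem~\ref{NoiselessThm} directly with that fixed $D=D_\Delta$ as $n\to\infty$. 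All three failure-probability terms then vanish with $n$ automatically since $D$, $R$, $\alpha$, $\eta$ are fixed constants. Your diagonal construction with $D_n\to 0$ instead produces a \emph{single} family of codes valid for every $\Delta$ simultaneously, which is strictly more than the corollary claims; the price is the extra bookkeeping you identified (slowing down $D_n$ to match blocklength thresholds, restricting to a submatrix of $A$ when $m_n$ exceeds the Theorem~\ref{NoiselessThm} prescription, and the ad hoc inflation of $R_n$ to ensure $nR_n\to\infty$). These steps are all defensible, but they are unnecessary for the statement at hand; fixing $D$ based on $\Delta$, as the paper does, sidesteps them entirely.
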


\begin{proof}
The proof is provided in Section \ref{sec:proofs}.
\end{proof}

Applying Corollary \ref{CSP-RDD-Corollary} to the piecewise constant source described in Theorem \ref{thm:r-d-bound}, implies that the almost lossless performance of the CSP algorithm for such a source depends on the RDD of this source. Corollary \ref{thm:RDD-piecewise-constant} shows that the for the piecewise constant source we have 
 $
 {\dim}_R({\Xbbf})=\bar{d}_o(\Xv)=p.
 $
 Combining  this together with Corollary \ref{CSP-RDD-Corollary} implies  that there exists a family of compression codes that when employed by the CSP algorithm  with a number of measurements satisfying  $
\liminf_{n\to\infty}{m_n\over n}> p$  yields an asymptotic  almost lossless recovery of this source.
 

\begin{remark}
Corollary \ref{CSP-RDD-Corollary} states that the CSP algorithm can achieve almost lossless recovery, using slightly more than $n\overline{\dim}_R({\Xbbf})$ random linear measurements. On the other hand, for i.i.d. sources, under some mild conditions,   $n{d}_o(\Xbbf)$ characterizes the  minimum required number of measurements for almost lossless recovery \cite{WuV:10}. Note that if the rate-distortion function of the source satisfies the condition of  Theorem \ref{thm:ID-eq-RDD}, then  ${\dim}_R({\Xbbf})=\bar{d}_o({\Xbbf})$. Even without such assumption on the rate-distortion function, we can employ Lemma \ref{lemma:connect-UID-URDD} to upper bound $\overline{\dim}_{R}(\Xbbf)$ by $\bar{d}_o({\Xbbf})$ and get the same result. Therefore, at least for memoryless i.i.d. sources, the CSP algorithm achieves the optimal performance, in terms of achieving the minimum number of measurements.
\end{remark}

For general stationary sources, if the decoder is restricted to be Lipschitz-continuous, which is formally defined below, then asymptotically the normalized number of measurements should be larger than $\limsup_{n\to\infty} {\bar{d}(X^n)/n}$ \cite{WuV:12}, which is equal to $\bar{d}_o(\Xbbf)$ \cite{JalaliP:14-arxiv}. While the CSP decoder is not Lipschitz continuous, we conjecture that the lower bound also holds for less-restricted decoders. Proving or disproving this is an interesting topic for future research. 

\begin{definition}[Lipschitz continuity]
Consider a set $\Ac\subset\mathds{R}^k$. A function $f:\Ac\to\mathds{R}^n$ is called Lipschitz continuous if there exists constant $c\in\mathds{R}$, such that
\[
\|f(x)-f(y)\|\leq c\|x-y\|,
\]
for all $x,y\in\Ac$.
\end{definition}



\section{From universal compression to universal compressed sensing}\label{sec:UCSP-performance}

In compressed sensing,  the decoder tries to find the signal that matches the measurements and also has the same structure as the unknown input signal. In many applications, the structure of the input signal is  not known by the decoder or is known only partially. For instance, for an image, the decoder might know that the wavelet coefficients of the image are sparse, but the image might have much more structure that is not known by the decoder.  Moreover, in many application, it is desired to have decoders  that work well for sources with different statistics. In summary, from a practical viewpoint, it appealing to have  decoders that take advantage of all the information contained in the structure of the signal,  without having any prior knowledge about the source distribution, Such decoders, potentially, lead to very efficient compressed sensing algorithms that work for various source models.

A universal compressed sensing decoder aims at recovering an input signal from its under-determined linear measurements, without having access to the source distribution or the source model.  The existence of such universal recovery algorithms is known for both deterministic \cite{JalaliM:14} and stochastic \cite{BaronD:11,BaronD:12,JalaliP:14-arxiv} settings. In this section, we prove that a family of universal compression codes combined by the CSP algorithm leads to a family of universal compressed sensing recovery algorithms.

Consider a family of variable-length point-wise universal lossy compression codes $(n,f_n,g_n)$ for analog stationary ergodic processes with alphabet $\Xc\subset\mathds{R}$. Assume that the family of codes $(n,f_n,g_n)$ operates at fixed distortion $D$.  That is,  for any stationary ergodic process $\Xbbf=\{X_i\}_i$, with $X_i\in\Xc$,
\begin{enumerate}
  \item[i)] $\lim_{n\to\infty}{1\over n}|f_n(X^n)| = R(\Xbbf,D)$, almost surely,
  \item[ii)] $\lim_{n\to\infty}\P({1\over n}\|X^n-\Xh^n\|^2_2\geq D+\e)=0$,
\end{enumerate}
for any $\e>0$.

Consider $X^n$ generated by a stationary ergodic process $\Xbbf$ with rate-distortion function $R(\Xbbf,D)$. A universal compressed sensing  decoder observes $Y^m=AX^n$, and aims at estimating $X^n$ from $Y^m$, employing the code $(f_n,g_n)$, without having access to the distribution of the source. To achieve this goal, consider the following slightly modified version of the CSP algorithm, which we refer to as universal CSP (UCSP):
\begin{align}\label{eq:universal-csp}
  \min &\;\;\;\; \|Au^n-Y^m\|_2\nonumber \\
  {\rm s.t.} &\;\;\;\;u^n=g_n(b),\nonumber \\
  &\;\;\;\; b\in\{0,1\}^*, |b|\leq n( R(\Xbbf,D)+\e).
\end{align}
In other words,  among all binary sequences of length smaller than $n(R(\Xbbf,D)+\e)$, UCSP searches for the one whose decompressed version via the universal  decoder $g_n$ yields the smallest measurement error.

For $A_{ij}\stackrel{\rm i.i.d.}{\sim}\Nc(0,1)$, let $\Xt^n$ denote the minimizer of the UCSP algorithm that employs a point-wise universal compression code operating at distortion $D$. The following theorem characterizes the performance of the UCSP algorithm and proves that a universal compression code leads to a universal compressed sensing algorithm.

\begin{theorem}\label{thm:universal-CS}
Consider $Y^m=A X^n$, a system of random linear observations  with measurement matrix $A\in {\rm I\!R}^{m\times n}$, where $A_{i,j}\sim$  i.i.d $\mathcal{N}(0,1)$. Let $\Cc_n$ be variable-length point-wise universal lossy compression code operating at rate $R$ that achieves  distortion $D$ with excess distortion probability $\epsilon$.
 For   $\a>0$, $\e>0$, and $\eta>1$, such that ${\eta\over \log{1\over D}}+\a>\e$, let $\d={\eta\over \log{1\over D}}+\a-\e$. Suppose $\Xt^n$ refers to the solution of the UCSP algorithm as in \eqref{eq:universal-csp}. Then,  for $n$ large enough, and
 \[
{m\over n} = \frac{2\eta R(\Xbbf,D)}{\log{1\over D}},
\] we have
\begin{align*}
\P&\Big(\frac{1}{\sqrt{n}}\lVert\; X^n - \Xt^n\rVert_2 \geq (2+\sqrt{n/m})D^{\frac{1}{2}(1-\frac{1+\d}{\eta})}  \Big)\\
&\;\;\leq \e+2^{-\frac{1}{2}nR\alpha}+ \mathrm{e}^{-{m\over 2}}.
\end{align*}

\end{theorem}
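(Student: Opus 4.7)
The plan is to reduce UCSP to a CSP instance over an implicit ``universal codebook''. I would define
\[
\tilde{\Cc}_n = \{g_n(b) : b \in \{0,1\}^*,\; |b| \leq n(R(\Xbbf,D) + \e)\},
\]
so that the UCSP minimization in \eqref{eq:universal-csp} is exactly the CSP minimization in \eqref{CSPsolution} over $\tilde{\Cc}_n$. Since $|\tilde{\Cc}_n| \leq 2\cdot 2^{n(R(\Xbbf,D)+\e)}$, this set plays the role of a fixed-rate codebook at effective rate $R' := R(\Xbbf,D) + \e$ (plus vanishing overhead). The strategy is then to show that UCSP behaves almost exactly like CSP applied to $\tilde{\Cc}_n$ and import the bound of Theorem \ref{NoiselessThm} with the appropriate parameter shifts.

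The first step is to establish a ``good event'' using the pointwise universality hypotheses (i) and (ii). By (i), $\tfrac{1}{n}|f_n(X^n)| \to R(\Xbbf,D)$ almost surely, so for $n$ large enough $\P(|f_n(X^n)| > n(R(\Xbbf,D)+\e)) \leq \e/2$; on this event the canonical reconstruction $\Xh^n := g_n(f_n(X^n))$ lies in $\tilde{\Cc}_n$. By (ii) combined with the assumed excess-distortion probability $\e$, we also have $\P(\tfrac{1}{n}\|X^n - \Xh^n\|_2^2 > D) \leq \e/2$ for $n$ large. A union bound then yields, with probability at least $1-\e$, an explicit codeword $\Xh^n \in \tilde{\Cc}_n$ within squared distortion $nD$ of $X^n$, which is exactly the starting ingredient of Theorem \ref{NoiselessThm}.

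The second step is to mirror the proof of Theorem \ref{NoiselessThm} with $\tilde{\Cc}_n$ in place of $\Cc_n$. That argument combines (a) a concentration bound showing $\|A(\Xh^n - X^n)\|_2 \leq (1+\beta)\sqrt{m}\,\|\Xh^n - X^n\|_2 \leq (1+\beta)\sqrt{mnD}$ via the Gaussianity of the entries of $A$, and (b) a union bound over $c \in \tilde{\Cc}_n$ with $\|c - X^n\|_2 \geq T$ ruling out any such $c$ as the UCSP minimizer. The union-bound factor is now $|\tilde{\Cc}_n| \leq 2^{nR'+1}$ and the Gaussian tail bound contributes $e^{-\Omega(m)}$. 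Setting $m/n = 2\eta R(\Xbbf,D)/\log(1/D)$ — slightly less than what $R'$ would require in Theorem \ref{NoiselessThm} — still makes the union bound decay, at the cost of shrinking the usable slack in the reconstruction exponent: this is exactly why the parameter $\d$ in the hypothesis is shifted from $\eta/\log(1/D) + \a$ to $\eta/\log(1/D) + \a - \e$, so that the $\e$-bit overhead of universality gets charged to $\d$ rather than to $m$.

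The main obstacle I expect is precisely this parameter bookkeeping: since $m$ is sized for rate $R(\Xbbf,D)$ rather than for the effective codebook rate $R' = R(\Xbbf,D)+\e$, the extra $n\e$ entropy of $\tilde{\Cc}_n$ must be absorbed into the reconstruction exponent $\tfrac{1}{2}(1 - (1+\d)/\eta)$ and into the union-bound slack $\alpha$ in $2^{-nR\alpha/2}$. Carrying the $\e$ correction carefully through the concentration estimate, the Gaussian tail bound, and the final normalization (so as to recover the stated bound $(2+\sqrt{n/m})D^{(1-(1+\d)/\eta)/2}$ and the identical failure-probability form $\e + 2^{-nR\a/2} + e^{-m/2}$) is the only delicate piece; no conceptually new ingredient beyond the proof of Theorem \ref{NoiselessThm} and the pointwise universality of $(n,f_n,g_n)$ is needed.
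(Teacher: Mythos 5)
Your proposal matches the paper's proof essentially line for line: define the good event from the two point-wise universality conditions (probability that either the description length exceeds $n(R(\Xbbf,D)+\e)$ or the distortion exceeds its target is at most $\e$), observe that the UCSP search set has cardinality at most $2^{n(R(\Xbbf,D)+\e)+1}-1$, and then run the Theorem~\ref{NoiselessThm} union-bound/$\chi^2$-concentration machinery over this enlarged ``codebook,'' absorbing the extra $n\e$ bits into the slack parameter $\d$. The only small imprecision is that condition (ii) only guarantees $\tfrac{1}{n}\|X^n-\Xh^n\|_2^2 \le D+\e$ (not $D$) on the good event, which is how the paper itself states it in~\eqref{eq:UCSP-perform-1}; this does not alter the structure of the argument.
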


\begin{proof}
The proof is provided in Section \ref{sec:proofs}.
\end{proof}

Comparing Theorem \ref{thm:universal-CS} with Theorem \ref{NoiselessThm}, it can be observed that the performance trade-offs for the CSP and the UCSP are exactly the same in terms of the rate-distortion behavior of the underlying compression code. The difference between the two is in the fact that   the CSP optimization employs a compression code that is designed for input source distribution, but the UCSP optimization requires a universal compression code.   This might suggest that since UCSP has then same asymptotic performance as CSP, and in addition works for any input distribution, it is always a better choice than the CSP. However, note that UCSP is build upon  point-wise universal compression codes for analog sources. While such codes theoretically exist, practical instances of such codes are yet to be found. Moreover, another potential disadvantage of the UCSP compared to the CSP optimization is that while universal codes usually achieve the same asymptotic performance as non-universal codes, their finite blocklength performance is worse than non-universal codes.  

Similar to Corollary \ref{CSP-RDD-Corollary}, the following corollary considers the special  case where the distortion approaches zero, and proves that, as long as the normalized number of measurements is larger than $\overline{\dim}_R({\Xbbf})$, there exist universal compression codes that yield  universal compressed sensing algorithms that can estimate the source almost losslessly. Note that $\overline{\dim}_R({\Xbbf})$ is the RDD of the source $\Xbbf$, which depends on the source model and captures all the structure within the signal.

\begin{corollary} \label{cor-UCSP}
Consider a stationary process $\Xbbf$ and a system of random linear observations, $Y^m=A X^n$,  with measurement matrix $A\in {\rm I\!R}^{m\times n}$, where $A_{i,j}$ are  i.i.d.~  as $\mathcal{N}(0,1)$ and $m=m_n$ is the number of observations. For any observation error $\Delta>0$, if the sequence $m_n$ satisfies
\[
\liminf_{n\to\infty}{m_n\over n}> \overline{\dim}_R({\Xbbf}),
 \]
then there exists a family of variable-length point-wise universal lossy compression codes which, when used by the UCSP algorithm, yields
\[
\lim_{n\to\infty}\P\left(\frac{1}{\sqrt{n}}\lVert\; X^n - \Xt^n\rVert_2 \geq \Delta \right)\to 0.
\]
where $\Xt^n$ refers to the solution of the UCSP algorithm as in \eqref{eq:universal-csp}.

\end{corollary}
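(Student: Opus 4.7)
The plan is to reduce this corollary to Theorem \ref{thm:universal-CS} by choosing a vanishing sequence of distortions, in direct analogy with the argument used for Corollary \ref{CSP-RDD-Corollary}, the only difference being that the source-matched compression code of that proof is now replaced by a member of the assumed family of variable-length point-wise universal codes.

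First I would fix constants $\mu>0$ and $\eta>1$, both close to their respective thresholds, such that $\eta(\overline{\dim}_R(\Xbbf)+\mu)<\liminf_{n\to\infty} m_n/n$. Such a choice is possible because $\liminf_{n\to\infty} m_n/n>\overline{\dim}_R(\Xbbf)$ strictly. Using $\overline{\dim}_R(\Xbbf)=2\limsup_{D\to 0} R(\Xbbf,D)/\log(1/D)$, I would pick a threshold $D^{\star}>0$ below which $2 R(\Xbbf,D)/\log(1/D)\leq \overline{\dim}_R(\Xbbf)+\mu$, and then select a sequence $D_n\to 0$ with $D_n<D^{\star}$ and $2\eta R(\Xbbf,D_n)/\log(1/D_n)\leq m_n/n$ for all large $n$. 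This guarantees the normalized-measurement requirement of Theorem \ref{thm:universal-CS} is satisfied.

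Next I would invoke, for each $n$, a member of the assumed universal code family operating at fixed distortion $D_n$; its rate converges almost surely to $R(\Xbbf,D_n)$ and its excess distortion probability $\varepsilon_n$ tends to zero. Applying Theorem \ref{thm:universal-CS} with the fixed $\eta$, a vanishing rate slack $\varepsilon_n^{\prime}$, and $\alpha_n\to 0$ chosen so that $nR(\Xbbf,D_n)\alpha_n\to\infty$, produces $\delta_n=\eta/\log(1/D_n)+\alpha_n-\varepsilon_n^{\prime}\to 0$. Hence the exponent $(1-(1+\delta_n)/\eta)/2$ converges to $(1-1/\eta)/2>0$, and since $\sqrt{n/m_n}$ is uniformly bounded above, the reconstruction-error bound $(2+\sqrt{n/m_n})\,D_n^{(1-(1+\delta_n)/\eta)/2}$ tends to zero and thus eventually falls below any prescribed $\Delta>0$. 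The failure probability $\varepsilon_n+2^{-nR(\Xbbf,D_n)\alpha_n/2}+e^{-m_n/2}$ likewise vanishes term by term, which yields the claim.

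The main obstacle is the simultaneous calibration of the parameters $\eta$, $\alpha_n$, $\varepsilon_n^{\prime}$, and $D_n$: the sequence $D_n$ must decay fast enough that $D_n^{(1-1/\eta)/2}\to 0$ yet slowly enough that $2\eta R(\Xbbf,D_n)/\log(1/D_n)$ remains below $m_n/n$, so that $\eta$ can be chosen strictly larger than $1$ while the rate constraint stays feasible; concurrently $\alpha_n$ must decay slowly enough to force $nR(\Xbbf,D_n)\alpha_n\to\infty$. A secondary point to be careful about is the distinction between the excess-distortion probability of the universal code and the rate-slack parameter of the UCSP optimization, both denoted $\varepsilon$ in the statement of Theorem \ref{thm:universal-CS}; these must be taken as two independent null sequences. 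Once these asymptotic balances are in place, the conclusion follows immediately from Theorem \ref{thm:universal-CS}.
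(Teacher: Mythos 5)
Your reduction to Theorem~\ref{thm:universal-CS} is the right idea and your bookkeeping for $\eta$, $\mu$, $D^{\star}$, and the rate constraint is sound, but the route you take is genuinely different from, and harder than, the one the paper has in mind.  The paper's proof of Corollary~\ref{CSP-RDD-Corollary} (and hence the omitted proof of this corollary) does \emph{not} use a vanishing sequence of distortions.  Because the target is $\P\bigl(\tfrac{1}{\sqrt n}\|X^n-\Xt^n\|_2\ge\Delta\bigr)\to 0$ for a \emph{fixed} $\Delta>0$, one can simply pick a single $D_\Delta\in(0,D^\star)$ small enough that the deterministic error bound $(2+\sqrt{n/m})\,D_\Delta^{\,(1-(1+\delta)/\eta)/2}$ is below $\Delta$, keep $\eta>1$ and $\alpha>0$ fixed, and let $n\to\infty$ with $D$ \emph{held constant}.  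Then the three failure-probability terms $\e+2^{-nR\alpha/2}+\mathrm{e}^{-m/2}$ each vanish by the defining property of the universal code at a fixed distortion level together with $m_n/n$ bounded away from $0$, and the proof is over.  No sequence $D_n$, no sequence $\alpha_n$, no joint calibration is needed.

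Your $D_n\to 0$ version can be made to work, but it buys nothing here and reintroduces exactly the obstacle you flag at the end: for the universal code the properties
\[
\lim_{n\to\infty}\tfrac{1}{n}|f_n(X^n)|=R(\Xbbf,D)\ \text{a.s.},\qquad
\lim_{n\to\infty}\P\bigl(\tfrac{1}{n}\|X^n-\Xh^n\|_2^2\ge D+\e\bigr)=0,
\]
are stated for a \emph{fixed} $D$ as $n\to\infty$, so invoking them along a diagonal $(n,D_n)$ requires a separate argument (for each $D$ in a countable grid, find $n_D$ past which the rate slack and excess-distortion probability are below a prescribed tolerance, then let $D_n$ decrease only when $n$ passes the corresponding threshold).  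You name this issue but do not close it, so as written the proposal has a gap that the fixed-$D$ argument sidesteps entirely.  A secondary remark: there is no need to take $\alpha_n\to 0$; a fixed $\alpha>0$ chosen small enough that $\delta<\eta-1$ already gives a positive distortion exponent and makes $2^{-nR\alpha/2}\to 0$.  I recommend rewriting the proof with $D$ fixed, mirroring the proof of Corollary~\ref{CSP-RDD-Corollary} line by line, with the lossy code replaced by the universal one and Theorem~\ref{NoiselessThm} replaced by Theorem~\ref{thm:universal-CS}.
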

\begin{proof}
The proof is very similar to the proof of Corollary \ref{CSP-RDD-Corollary} and is omitted.
\end{proof}

\section{Proofs}\label{sec:proofs}

The following lemma from \cite{JalaliM:13-isit} is used in some of the proofs.

\begin{lemma}[$\chi^2$-construction]\label{chi2}
   Fix $\tau>0$ and let $\Z_i\sim\mathcal{N}(0,1)$, $i=1,2,\dots,m$. Then,
   \begin{IEEEeqnarray*}{rl}
       \P\big(\sum_{i=1}^m\Z_i^2<m(1-\tau)\Big)\leq\mathrm{e}^{\frac{m}{2}(\tau+\ln(1-\tau))}
   \end{IEEEeqnarray*}
   and
   \begin{IEEEeqnarray*}{rl}
       \P\Big(\sum_{i=1}^m\Z_i^2>m(1+\tau)\Big)\leq\mathrm{e}^{-\frac{m}{2}(\tau-\ln(1+\tau))}.
   \end{IEEEeqnarray*}
\end{lemma}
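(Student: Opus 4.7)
The plan is to prove both tail bounds via a standard Chernoff/exponential-moment argument, exploiting the fact that $\sum_{i=1}^m Z_i^2$ has a $\chi^2_m$ distribution whose moment generating function is available in closed form. The two inequalities are symmetric in structure, so the main work is done once and then adapted for the lower tail by tilting with a negative exponent.

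First I would compute the moment generating function of $Z_i^2$ for $Z_i \sim \Nc(0,1)$. A direct Gaussian integral gives $\E[e^{\lambda Z_i^2}] = (1-2\lambda)^{-1/2}$ for $\lambda < 1/2$, and by independence $\E\bigl[e^{\lambda \sum_{i=1}^m Z_i^2}\bigr] = (1-2\lambda)^{-m/2}$. For the upper tail, I would apply the Markov/Chernoff inequality: for any $\lambda \in (0,1/2)$,
\[
\P\Bigl(\sum_{i=1}^m Z_i^2 > m(1+\tau)\Bigr) \;\leq\; e^{-\lambda m(1+\tau)}(1-2\lambda)^{-m/2}.
\]
Then I would optimize over $\lambda$: differentiating the exponent $-\lambda(1+\tau) - \tfrac{1}{2}\ln(1-2\lambda)$ and setting the derivative to zero yields $1-2\lambda = 1/(1+\tau)$, i.e.\ $\lambda^\star = \tau/(2(1+\tau))$. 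Plugging $\lambda^\star$ back in gives $-\lambda^\star(1+\tau) = -\tau/2$ and $-\tfrac{1}{2}\ln(1-2\lambda^\star) = \tfrac{1}{2}\ln(1+\tau)$, producing the stated bound $e^{-\frac{m}{2}(\tau - \ln(1+\tau))}$.

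For the lower tail, I would use the same recipe with a negative tilt: for any $\lambda > 0$,
\[
\P\Bigl(\sum_{i=1}^m Z_i^2 < m(1-\tau)\Bigr) = \P\Bigl(-\sum_{i=1}^m Z_i^2 > -m(1-\tau)\Bigr) \leq e^{\lambda m(1-\tau)}\,\E\bigl[e^{-\lambda \sum Z_i^2}\bigr] = e^{\lambda m(1-\tau)}(1+2\lambda)^{-m/2}.
\]
Optimizing the exponent $\lambda(1-\tau) - \tfrac{1}{2}\ln(1+2\lambda)$ in $\lambda > 0$ yields $1+2\lambda = 1/(1-\tau)$, i.e.\ $\lambda^\star = \tau/(2(1-\tau))$, which gives $\lambda^\star(1-\tau) = \tau/2$ and $-\tfrac{1}{2}\ln(1+2\lambda^\star) = \tfrac{1}{2}\ln(1-\tau)$, yielding $e^{\frac{m}{2}(\tau + \ln(1-\tau))}$ as claimed.

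The only subtlety worth flagging is that $\tau$ must lie in $(0,1)$ for the lower-tail bound to be meaningful (so that $\ln(1-\tau)$ is defined and the optimizer $\lambda^\star > 0$ lies in the admissible range of the MGF); for the upper tail, any $\tau > 0$ works, with $\lambda^\star \in (0,1/2)$ automatically. No other obstacles arise, and since the lemma is invoked elsewhere in the paper only through these two Chernoff-type inequalities, the above two-line optimization is the entire proof.
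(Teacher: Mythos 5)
Your proof is correct. The paper itself does not prove Lemma~\ref{chi2}; it cites the result directly from \cite{JalaliM:13-isit}, so there is no in-paper argument to compare against. Your Chernoff-bound derivation --- computing the $\chi^2_m$ moment generating function $(1-2\lambda)^{-m/2}$, applying Markov's inequality to the exponentially tilted random variable, and optimizing the tilt parameter to get $\lambda^\star = \tau/(2(1+\tau))$ for the upper tail and $\lambda^\star = \tau/(2(1-\tau))$ for the lower tail --- is the standard route to exactly these two bounds, and the algebra you report (cancellation to $\mp\tau/2$ in the linear term and $\pm\tfrac{1}{2}\ln(1\pm\tau)$ in the logarithmic term) checks out. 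Your remark that the lower-tail bound requires $\tau \in (0,1)$ is also correct and worth stating, since the lemma as written only says $\tau>0$; in the paper's application, $\tau = 1 - D^{(1+\delta)/\eta}$ with $0 < D < 1$, which does lie in $(0,1)$, so the restriction is harmless there.
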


\subsection{Proof of Theorem \ref{NoiselessThm}}\label{Noiseless-Proof}
Let $\hat{\X}^n=g_n(f_n(X^n))$. Since  $\tilde{\X}^n=\argmin_{x^n\in\Cc_n}\|Y^m-Ax^n\|_2^2$, and $\Xh^n\in\Cc_n$,
\[
\|Y^m-A\tilde{\X}^n\|_2\leq\|Y^m-A\hat{\X}^n\|_2.
\] Substituting $AX^n$ for $Y^m$, it follows that
\begin{align}\label{eq:triangle-ineq}
   \|A(X^n-\tilde{\X}^n)\|_2\leq\|A(X^n-\hat{\X}^n)\|_2.
\end{align}
Define the event $\mathcal{E}_0$ as
\begin{align*}
\Ec_0&\triangleq \{\|X^n-\hat{\X}^n\|_2^2\leq nD\}.
\end{align*}
By assumption, $\P(\Ec_0^c)\leq \e$. Conditioned on $\mathcal{E}_0$, from \eqref{eq:triangle-ineq}, we have
\begin{align}
   \|A(X^n-\tilde{\X}^n)\|_2&\leq \sigma_{\max}(A)\sqrt{nD}\label{eq:bd-sigma-max}
\end{align}
where $\sigma_{\max}(A)$ is the maximum singular value of $A$.
 Define events $\Ec_1$ and $\Ec_2$ as
\begin{align*}
   \mathcal{E}_1\triangleq\big\{&\forall \xt^n\in\Cc:\\
   &\;\;\|A(\X^n-\xt^n)\|_2\geq\sqrt{(1-\tau)m}\|\X^n-\xt^n\|_2\big\},
\end{align*}
   where $\tau\in(0,1)$,
   and
   \[
   \mathcal{E}_2\triangleq\{\sigma_{\max}(A)-\sqrt{m}-\sqrt{n}<\sqrt{m} \}.
\]
Then, conditioned on  $\Ec_0\cap\Ec_1\cap\Ec_2$, it follows from \eqref{eq:bd-sigma-max} that
\begin{align}
   \sqrt{m(1-\tau)}\|X^n-\tilde{\X}^n\|_2&\leq\|A(X^n-\tilde{\X}^n)\|_2\nonumber\\
   &\leq\sigma_{\max}(A)\sqrt{nD}\nonumber\\
   &\leq(\sqrt{n}+2\sqrt{m})\sqrt{nD}.\label{eq:cond-on-E1-2-3}
\end{align}
Rearranging the terms and setting $m=\frac{2\eta nR}{\log(1/D)}$ and $\tau=1-D^{(1+\d)/\eta}$ in \eqref{eq:cond-on-E1-2-3} yields
\begin{align}
   \frac{1}{\sqrt{n}}\|X^n-\tilde{\X}^n\|_2 &\leq \sqrt{D \over 1-\tau}\left(\sqrt{\frac{n}{m}}+2\right)\\
 =& \sqrt{\frac{D}{D^{(1+\d)/\eta}}}\left(\sqrt{\frac{n}{m}}+2\right)\nonumber\\
 =& D^{0.5(1-(1+\d)/\eta)}\left(\sqrt{\frac{n}{m}}+2\right).\label{exp1}
\end{align}

The inequality in (\ref{exp1}) holds with probability $\P(\mathcal{E}_0\cap\mathcal{E}_1\cap\mathcal{E}_2)$. In the last step of the proof,  a lower bound on this probability or equivalently an upper bound on $\P(\mathcal{E}^c_0\cup\mathcal{E}^c_1\cup\mathcal{E}^c_2)$ is derived.

Fixing  $X^n=x^n$ and $\xt^n$, $A(x^n-\xt^n)/\|\x^n-\xt^n\|_2$ is a vector of i.i.d. $\mathcal{N}(0,1)$ random variables. Therefore, by Lemma \ref{chi2},
\begin{align*}
   \P_A&\left(\|A(x^n-\xt^n)\|_2\leq\sqrt{(1-\tau)m} \|x^n-\xt^n\|_2\right)\\
   &\;\;\leq\mathrm{e}^{\frac{m}{2}(\tau+\ln(1-\tau))},
\end{align*}
and  by the union bound, for a fixed $X^n=x^n$,
\begin{align}
   \P_A&(\exists \xt^n\in\Cc:\;\|A(x^n-\xt^n)\|_2\leq\sqrt{(1-\tau)m} \|x^n-\xt^n\|_2 )\nonumber\\
   &\;\;\leq 2^{nR}\mathrm{e}^{\frac{m}{2}(\tau+\ln(1-\tau))}\nonumber\\
   &\;\;=2^{nR+{m\over 2}(\tau \log\ex +\log(1-\tau))}.\label{eq:Ec1_0}
\end{align}
Taking the expected value of the both sides of \eqref{eq:Ec1_0} with respect to $X^n$, and noting that the right hand side  of \eqref{eq:Ec1_0} is not random, it follows that
\begin{align}
 \E_{X^n}&\Big[ \P_A(\exists \xt^n\in\Cc:\nonumber\\
 &\;\;\; \|A(X^n-\xt^n)\|_2\leq\sqrt{(1-\tau)m} \|X^n-\xt^n\|_2 )\Big]\nonumber\\
 &\leq 2^{nR+{m\over 2}(\tau \log\ex +\log(1-\tau))}.\label{eq:Ec1_1}
\end{align}

Rewriting $\P_A(\Ec)$ as $\E_A[\ind_{\Ec}]$ and employing Fubini's theorem to exchange the order of integration, it follows from \eqref{eq:Ec1_1} that
\begin{align}
 \E_{A}&\Big[  \P_{X^n}(\exists \xt^n\in\Cc:\nonumber\\
 &\;\;\; \|A(X^n-\xt^n)\|_2\leq\sqrt{(1-\tau)m} \|X^n-\xt^n\|_2 )\Big]\nonumber\\
 &\leq 2^{nR+{m\over 2}(\tau \log\ex +\log(1-\tau))}.\label{eq:Ec1_2}
\end{align}

Substituting  for $m$, the exponent in \eqref{eq:Ec1_2} can be upper-bounded as follows:
\begin{align*}
nR+&{m\over 2}(\tau \log\ex +\log(1-\tau))\\
&= nR \Big( 1+\frac{\eta}{-\log D} (1-D^{(1+\d)/\eta}+\frac{(1+\d)}{\eta}\log D) \Big)\\
&= nR \Big( 1-(1+\d)- \frac{ \eta}{\log D} (1-D^{(1+\d)/\eta}) \Big)\\
&\leq nR (-\d - \frac{ \eta}{\log D})\nonumber\\
&=-nR\alpha.\label{eq:ub-exponent}
\end{align*}

Define the function $\upsilon_n$, where $\upsilon_n: \mathds{R}^{m\times n}\to[0,1]$, as $\upsilon_n(A)\triangleq \P_{X^n}(\exists \xt^n\in\Cc:\;\|A(X^n-\xt^n)\|_2\leq\sqrt{(1-\tau)m} \|X^n-\xt^n\|_2 )$. Note that in our model, $m$ is also a function of $n$. We prove that $\upsilon(n)$ converges to zero, almost surely. By Markov's inequality,  from \eqref{eq:Ec1_2} and \eqref{eq:ub-exponent},  it follows that
\begin{align}
\P(\upsilon_n(A)>2^{-{1\over 2}nR\alpha})\leq {\E[\upsilon_n(A)]\over 2^{-{1\over 2}nR\alpha}}\leq 2^{-{1\over 2}nR\alpha}.
 \end{align}
 Therefore, by the Borel Cantelli Lemma, $\upsilon_n(A)<2^{-{1\over 2}nR\alpha}$, eventually almost surely, and hence $\upsilon_n(A)$ converges to zero, almost surely. This results implies that with probability one $\P_{X^n}(\Ec_1^c)$ converges to zero.

Finally, to upper bound $\P(\mathcal{E}^c_2)$, from  \cite{CaTa05},  by the concentration of Lipschitz functions of a Gaussian vector,
\begin{align}
\P(\sigma_{\max}(A)-\sqrt{m}-\sqrt{n}\geq t\sqrt{m})   \leq\mathrm{e}^{-mt^2/2}.\label{eq:Ec2-1}
\end{align}
Letting $t=1$ in \eqref{eq:Ec2-1}, it follows that
\begin{align}
   \P(\mathcal{E}^c_2)&=\P(\sigma_{\max}(A)-\sqrt{m}-\sqrt{n}\geq \sqrt{m})\nonumber\\
   &\leq \mathrm{e}^{-m/2}.\label{eq:Ec2-final}
\end{align}

\subsection{Proof of Theorem \ref{Noisy Thm}}

Let $\hat{\X}^n=g_n(f_n(\X^n))$. Similar to the proof of Theorem \ref{NoiselessThm},  since  $\tilde{\X}^n=\argmin_{x^n\in\Cc_n}\|Y^m-Ax^n\|_2^2$, and $\Xh^n\in\Cc_n$, $\|Y^m-A\tilde{\X}^n\|_2\leq\|Y^m-A\hat{\X}^n\|_2$. Substituting $Y^m=A\X^n+Z^m$, we have
  \begin{align*}
&   \|AX^n-A\tilde{\X}^n\|_2-\|Z^m\|_2\nonumber\\
   &\leq\|A\X^n-A\hat{\X}^n\|_2+\|Z^m\|_2,
\end{align*}
or
\begin{align}
   \|AX^n-A\tilde{\X}^n\|_2&\leq\|A\X^n-A\hat{\X}^n\|_2+2\|Z^m\|_2 .\label{eq:triangle-ineq-noise}
\end{align}
Define events $\Ec_0$, $\Ec_1$ and $\Ec_2$ as in the proof of Theorem \ref{NoiselessThm} and $\Ec_3\triangleq \{{1\over \sqrt{m}}\|Z^m\|_2\leq \sigma_m\}$. Following similar steps as before, conditioned on $\Ec_0\cap\Ec_1\cap\Ec_2\cap\Ec_3$, we have
\begin{align}
   \sqrt{m(1-\tau)}\|X^n-\tilde{\X}^n\|_2
   &\leq (\sqrt{n}+2\sqrt{m})\sqrt{nD}+2\|Z^m\|_2\nonumber\\
   &\leq (\sqrt{n}+2\sqrt{m})\sqrt{nD}+2\sigma_m\sqrt{m}.\label{eq:cond-on-E1-2-3-noise}
\end{align}
The rest of the proof follows by setting $\tau=1-D^{(1+\d)/\eta}$, and substituting the values of the parameters in \eqref{eq:cond-on-E1-2-3-noise}.

\subsection{Proof of Lemma \ref{lemma:connect-UID-URDD}}
Given $k$, define distance measure $\rho_k$ such that for $x^k,\xh^k\in\mathds{R}^k$, $\rho_k(x^k,\xh^k)\triangleq \sqrt{kd_k(x^k,\xh^k)}$ where $d_k(\cdot,\cdot)$ is defined in (\ref{sq-err-distortion}). Note that $(\mathds{R}^k,\rho_k)$ is a metric space. Furthermore, since $\max_{i=1}^k|x_i-\xh_i|\leq \rho_k(x^k,\xh^k)\leq \sqrt{k} \max_{i=1}^k|x_i-\xh_i|$, from Theorem \ref{thm:prop3-3},
\[
2\limsup_{D\to0}{kR^{(k)}({\Xbbf},{D\over k})\over \log{1\over D}}=\bar{d}(X^k).
\]
By a change of variable, $2\limsup_{D\to0}{kR^{(k)}({\Xbbf},D)\over \log{1\over D}+\log{1\over k}}=\bar{d}(X^k),$
or
\[
2\limsup_{D\to0}{R^{(k)}({\Xbbf},D)\over \log{1\over D}}={1\over k}\bar{d}(X^k).
\]
Taking the limit of both sides as $k$ grows to infinity, and employing Lemma 2 from \cite{JalaliP:14-arxiv}, which shows that the upper ID of a process $\Xbbf$ can be alternatively be represented as
\[
\bar{d}_o({\Xbbf})=\lim_{k\to\infty}{1\over k}\bigg(\limsup_{b\to\infty} {H([X^k]_b)\over b}\bigg),
\]
yields
\begin{align}
\lim_{k\to\infty}\bigg(2\limsup_{D\to0}{R^{(k)}({\Xbbf},D)\over \log{1\over D}}\bigg)&=\lim_{k\to\infty}{1\over k}\bar{d}(X^k)\nonumber\\
&=\bar{d}_o({\Xbbf}).\label{eq:application-lemma-1}
\end{align}

Since $R^{(k)}({\Xbbf},D)\geq \inf_mR^{(m)}({\Xbbf},D)$, from \eqref{eq:application-lemma-1},
\begin{align*}
\bar{d}_o({\Xbbf})&\geq \lim_{k\to\infty}\bigg(2\limsup_{D\to0}{\inf_m R^{(m)}({\Xbbf},D)\over \log{1\over D}}\bigg)\nonumber\\
&\stackrel{(a)}{=} \lim_{k\to\infty}\bigg(2\limsup_{D\to0}{R({\Xbbf},D)\over \log{1\over D}}\bigg)=\overline{\dim}_{R}(\Xbbf),
\end{align*}
where (a) follows from the fact that $R(\Xbbf,D)=\inf_{m} R^{(m)}(\Xbbf,D)$ \cite{Gallager}. This proves the lower bound in the desired result.

To prove the upper bound, fix a positive integer $m\in\mathds{N}$. Any integer $k$ can be written as $k=sm+r$, where $r\in\{0,\ldots,m-1\}$. Since $kR^{(k)}({\Xbbf},D)$ is  a sub-additive sequence \cite{Gallager}, $kR^{(k)}({\Xbbf},D)\leq sm R^{(m)}({\Xbbf},D)+rR^{(r)}({\Xbbf},D),$
or
\begin{align}
R^{(k)}({\Xbbf},D)\leq {sm\over k} R^{(m)}({\Xbbf},D)+{r\over k}R^{(r)}({\Xbbf},D).\label{eq:sub-additive}
\end{align}
Combining \eqref{eq:application-lemma-1} and \eqref{eq:sub-additive}, it follows that
\begin{align}
\bar{d}_o({\Xbbf})\leq &\; 2\lim_{k\to\infty}\bigg(\limsup_{D\to0} {sm\over k} { R^{(m)}({\Xbbf},D)\over \log {1\over D}}\bigg)\nonumber\\
&+2\lim_{k\to\infty}\bigg( \limsup_{D\to0}{r\over k}{R^{(r)}({\Xbbf},D)\over \log{1\over D}}\bigg)\nonumber\\
= &\; 2\lim_{k\to\infty}\Big({sm\over k}\Big)\bigg(\limsup_{D\to0}  { R^{(m)}({\Xbbf},D)\over \log {1\over D}}\bigg)\nonumber\\
&+2\lim_{k\to\infty}\Big({r\over k}\Big)\bigg( \limsup_{D\to0}{R^{(r)}({\Xbbf},D)\over \log{1\over D}}\bigg)\nonumber\\
= &\; 2\bigg(\limsup_{D\to0}  { R^{(m)}({\Xbbf},D)\over \log {1\over D}}\bigg).\label{eq:ub_wo_inf}
\end{align}
Since $m$ is selected arbitrarily, we can take infimum of the  right hand side of \eqref{eq:ub_wo_inf} and derive the desired result.


\subsection{Proof or Lemma \ref{lemma:uniform_conv}}

By the lemma's assumption, $\overline{\dim}_{R}(\Xbbf) ={\dim}_{R}(\Xbbf) $; therefore, from Lemma \ref{lemma:connect-UID-URDD}, \begin{align}\label{eq:b1}
{\dim}_{R}(\Xbbf) \leq \bar{d}_o({\Xbbf})\leq 2\Big(\lim_{D\to0}  { R^{(m)}({\Xbbf},D)\over \log {1\over D}}\Big),
\end{align}
for all $m$. Given the uniform convergence assumption, for any $\e>0$, there exists $m_{\e}\in\mathds{N}$, such that for all $m>m_{\e}$,
\begin{align}\label{eq:cond1}
\left|{R^{(m)}({\Xbbf},D)\over \log{1\over D}}-{R({\Xbbf},D)\over \log{1\over D}}\right|<\epsilon,
\end{align}
for all $D\in(0,\sigma^2_{\max})$.

On the other hand, for any $\e'>0$ and  $m$, there exists $\d_{\e',m}>0$, such that for all $D\in(0,\d_{\e',m})$,
\begin{align}\label{eq:cond2}
\lim_{D\to0}  { R^{(m)}({\Xbbf},D)\over \log {1\over D}}\leq { R^{(m)}({\Xbbf},D)\over \log {1\over D}} +\e'.
\end{align}
Also, for any $\e''>0$, there exists $\d_{\e''}>0$, such that for all $D\in(0,\d_{\e''})$,
\begin{align}\label{eq:cond3}
 { R({\Xbbf},D)\over \log {1\over D}} \leq \frac{1}{2}\left({\dim}_{R}(\Xbbf) +\e''\right).
\end{align}

Therefore, for any $\e,\e'$ and $\e''$, choosing $m>m_{\e}$, and $D\in(0,\min(\d_{\e',m},\d_{\e''}))$, and combining \eqref{eq:cond1}, \eqref{eq:cond2} and \eqref{eq:cond3} yields
\begin{align}\label{eq:b2}
\bar{d}_o({\Xbbf})\leq  {\dim}_{R}(\Xbbf) +\e+\e'+\e''.
\end{align}
Since $\e,\e'$ and $\e''$ are selected arbitrarily, combining \eqref{eq:b1} and \eqref{eq:b2} proves that ${\dim}_{R}(\Xbbf) = \bar{d}_o({\Xbbf})$.

\subsection{Proof of Theorem \ref{thm:ID-eq-RDD}}

It is shown in \cite{WynerZ:71} that for any stationary process $\Xbbf$
\begin{align}
|R^{(m)}(\Xbbf,D)-R(\Xbbf,D)|\leq{1\over m}I(X^m;X^0_{-\infty}). \label{eq:bd-Wyner-Ziv}
\end{align}
Note that while some of the results in \cite{WynerZ:71} only hold for sources that are either absolutely continuous or discrete, as shown in Appendix \ref{app:WZ-genaral}, this bound holds for general sources. Since the right hand side of \eqref{eq:bd-Wyner-Ziv} does not depend on $D$, it shows that $R^{(m)}(\Xbbf,D)$ uniformly converges to $R(\Xbbf,D)$ for all $D>0$. On the other hand, for any $0<\sigma_{\max}<1$, and any $D\in(0,\sigma_{\max}^2)$, $0<1/\log{1\over D}<1/\log{1\over \sigma_{\max}^2}$. Therefore, ${R^{(m)}(\Xbbf,D)\over \log{1\over D}}$ uniformly converges  to ${R(\Xbbf,D)\over \log{1\over D}}$, for $D\in(0,\sigma_{\max}^2)$, and by Lemma \ref{lemma:uniform_conv}, ${\dim}_{R}(\Xbbf) = \bar{d}_o({\Xbbf}).$


\subsection{Proof of Theorem \ref{thm:r-d-bound}}\label{proof:thm-r-d-bound}

Let $X^n$ denote the output of the source. Given the source model, $X^n$ can be written as
\[
X^n=\underbrace{S_1,\ldots,S_{1}}_{T_1}, \underbrace{S_2,\ldots,S_{2}}_{T_2}, \ldots, \underbrace{S_N,\ldots,S_{N}}_{T_N},
\]
where $S_1,S_2,\ldots,S_N$ are i.i.d.~distributed according to $f_c$, and $\sum_{i=1}^{N}T_i=n.$ Moreover, $T_1,\ldots,T_{N-1}$ are i.i.d.~distributed geometric random variables with parameter $p$. That is, for $i=1,\ldots,N-1$ and $m\geq 1$, $ \P(T_i=m)=(1-p)^{m-1}p.$


\subsubsection{Converse}

Assume that  the pair $(R,D)$ is achievable for the coding source $X$. Then for any $\e>0$, there exists a code of blocklength $n$ sufficiently large, which operates at rate $R$ and achieves distortion $D+\e$. We prove that $R\geq pR_{f_c}(D)$:
\begin{align}
	nR&\geq H(M) \geq I(M;\Xh^n)\nonumber\\
	&\geq I(X^n;\Xh^n)=I(S^N,T^N,N;\Xh^n)\nonumber\\
	&\geq I(S^N;\Xh^n\mid T^N,N)\nonumber\\
	&=h(S^N\mid T^N,N)-h(S^N\mid\Xh^n,T^N,N) \nonumber\\
	&=\sum_{k=1}^n p_N(k)\bigg(h(S^k\mid T^k,N=k)\nonumber\\
	&\;\;\;\;\;\;\;\;\;\;\;\;\;\;\;\;\;\;\;\;\;-h(S^k|\Xh^n,T^k,N=k)\bigg)\nonumber\\
	&=\sum_{k=1}^n p_N(k)\left(h(S^k)-h(S^k\mid\Xh^n,T^k)\right)\nonumber\\
	&=\sum_{k=1}^n p_N(k)\left(\sum_{i=1}^k\left(h(S_i)-h(S_i\mid S^{i-1},\Xh^n,T^k)\right)\right)\nonumber\\
	&\geq\sum_{k=1}^n p_N(k)\left(\sum_{i=1}^k\left(h(S_i)-h(S_i\mid \Xh_{L_{i}}^{L_{(i+1)}-1},T^k)\right)\right)\label{Eq-nR:a}\\
	&=\sum_{k=1}^n p_N(k)\left(\sum_{i=1}^k\left(I(S_i; \Xh_{L_{i}}^{L_{(i+1)}-1}\mid T^k)\right)\right)\label{Eq-nR:b}
\end{align}
where in \eqref{Eq-nR:a} $L_i=1+\sum_{j=1}^{i-1}T_j$ and \eqref{Eq-nR:b} holds because $S$ and $T$ are independent. Given $T^k$ define $\Sh_i$ as follows:
\[
	\Sh_i = \argmin_{x\in\{\Xh_j  : j=L_i,\dots,L_{(i+1)-1}\}}d(S_i,x)
\]
Hence,
\begin{align}
	nR&\geq\sum_{k=1}^n p_N(k)\left(\sum_{i=1}^k\left(I(S_i; \Xh_{L_{i}}^{L_{(i+1)}-1}\mid T^k)\right)\right)\nonumber\\
	&\geq\sum_{k=1}^n p_N(k)\left(\sum_{i=1}^k\left(I(S_i; \Sh_i\mid T^k)\right)\right)\nonumber\\
	&=\sum_{k=1}^n p_N(k)\left(\sum_{i=1}^k\left(I(S_i; \Sh_i T^k)\right)\right)\label{Eq-nR2:1}\\
	&\geq\sum_{k=1}^n p_N(k)\left(\sum_{i=1}^k\left(I(S_i; \Sh_i)\right)\right)\nonumber\\
	&{\geq}\sum_{k=1}^n p_N(k) \sum_{i=1}^kR_{f_c}(\E[d(S_i,\Sh_i)])\label{Eq-nR2:2}\\
		&=\sum_{k=1}^n kp_N(k){1\over k} \sum_{i=1}^kR_{f_c}(\E[d(S_i,\Sh_i)])\nonumber\\
		&{\geq} \sum_{k=1}^n kp_N(k)R_{f_c}({1\over k}\sum_{i=1}^k \E[d(S_i,\Sh_i)])\label{Eq-nR2:3}\\
				&= \sum_{k=1}^n kp_N(k)R_{f_c}(\E[d_k(S^N,\Sh^N)|N=k]) \nonumber\\
				&= \E[NR_{f_c}(\E[d_N(S^N,\Sh^N)])],\label{Eq-nR2:c}
\end{align}
where  step \eqref{Eq-nR2:1} follows from the independence of $S_i$ and $T^k$ for all $i$, step \eqref{Eq-nR2:2} uses the definition of the rate-distortion function for source $S$, and  step \eqref{Eq-nR2:3} follows from the convexity of $R_{f_c}(D)$ and Jensen's inequality.
On the other hand, given that $N=k$,
\begin{align}
{1\over n}d(X^n; \Xh^n)&={1\over n}\sum_{i=1}^k \sum_{j=L_{i}}^{L_{i+1}-1}d(X_j,\Xh_j)\nonumber\\
&\geq {1\over n}\sum_{i=1}^k \sum_{j=L_{i}}^{L_{i+1}-1}d(S_i,\Sh_i)\nonumber\\
&= {1\over n}\sum_{i=1}^k T_id(S_i,\Sh_i).
\end{align}

Taking expectations on both sides, it follows that
\begin{align}
\E[d_n(X^n; \Xh^n)]&\geq  \E[{1\over n}\sum_{i=1}^N T_i d(S_i,\Sh_i)]\nonumber\\
&\geq \sum_{k=1}^{n} \E[{1\over n}\sum_{i=1}^k T_i d(S_i,\Sh_i)|N=k]p_N(k)\nonumber\\
&= {1\over n}\sum_{k=1}^{n}\sum_{i=1}^k \bigg(\E[T_i|N=k] \nonumber\\
&\;\;\;\;\;\;\;\;\;\;\;\;\;\;\;\;\E[d(S_i,\Sh_i)|N=k]p_N(k)\bigg).\label{eq:ineq-E-dn}
\end{align}
 Note that $T_1,T_2,\ldots,T_{N-1}$ are i.i.d.~and there exists $\tilde{T}_N$ such that $T_1,T_2,\ldots,T_{N-1},\tilde{T}_N$ are i.i.d. and $\sum_{i=1}^{N-1}T_i+\tilde{T}_N\geq n$. Given $N=k$, $\E[T_1|N=k]=\ldots=\E[T_{k-1}|N=k]=\E[\tilde{T}_k|N=k]$, and therefore
 \begin{align}
 \E[T_i | N=k]\geq {n\over k}.\label{eq:exp-Ti-given-N-eq-k}
 \end{align}
 Combining \eqref{eq:ineq-E-dn} and \eqref{eq:exp-Ti-given-N-eq-k}, since $\E[ T_k d(S_k,\Sh_k)|N=k]\geq 0 $, it follows that
 \begin{align}
\E[d_n(X^n; \Xh^n)] &\geq  {1\over n}\sum_{k=1}^{n}\sum_{i=1}^{k-1} {n\over k} \E[d(S_i,\Sh_i)|N=k]p_N(k)\nonumber\\
&= \E\bigg[ {N-1\over N} d_{N-1}(S^{N-1},\Sh^{N-1})\bigg].\label{eq:ineq-E-dn-1}
\end{align}
 But, $Nd_{N}(S^{N},\Sh^{N})=(N-1)d_{N-1}(S^{N-1},\Sh^{N-1})+d(S_N,\Sh_N).$ Hence,
 \begin{align}
 \E[d_{N}(S^{N},\Sh^{N})]&\leq \E\bigg[{N-1\over N}d_{N-1}(S^{N-1},\Sh^{N-1})\bigg]\nonumber\\
 &\;\;+E\bigg[{d_{\max}\over N}\bigg].\label{eq:ineq-E-dn-2}
\end{align}
Combining \eqref{eq:ineq-E-dn-1} and \eqref{eq:ineq-E-dn-2} yields
 \begin{align}
\E[d_n(X^n; \Xh^n)] &\geq   \E[d_{N}(S^{N},\Sh^{N})]-E\bigg[{d_{\max}\over N}\bigg].\label{eq:ineq-E-dn-3}
\end{align}

Since $N$ counts the number of jumps in $X^n$, it can be written as $\sum_{i=1}^n\ind_{X_i\neq X_{i-1}}.$
Let $U_i=\ind_{X_i\neq X_{i-1}}$. By construction, $\{U_i\}_{i=1}^{n}$ is a sequence of i.i.d.~$\Bern(p)$ random variables. Therefore, by Hoeffding's inequality \cite{Hoeffding},
\begin{align}
\P(|{1\over n}\sum_{i=1}^n U_i-p|>\e_1)\leq 2\ex^{-2n\e_1^2}.\label{eq:P-Ec2-c-l}
\end{align}
Now let $\e_n={p\over n^{1/4}},$
and define the event $\Ec_1$ as 
\begin{align}
\Ec_1=\{|{N\over n}-p|<\e_n\}. \label{eq:P-Ec1}
\end{align}
Conditioning on $\Ec_1$ we can rewrite \eqref{eq:ineq-E-dn-3} as
\begin{align}
\E[d_n(X^n; \Xh^n)] \geq&   \E[d_{N}(S^{N},\Sh^{N})]\nonumber\\
&\;\;\;-P(\Ec_1){d_{\max}\over n(p-\e_n)}-P(\Ec_1^c){d_{\max}}\nonumber\\
\geq&\E[d_{N}(S^{N},\Sh^{N})]\nonumber\\
&\;\;\;-{d_{\max}\over n(p-\e_n)}-2\ex^{-2n\e_n^2}d_{\max}\nonumber\\
=&\E[d_{N}(S^{N},\Sh^{N})]-\d_n,\label{eq:ineq-E-dn-4}
\end{align}
where $\d_n\to 0$ as $n\to\infty$. Combining \eqref{Eq-nR2:c} and \eqref{eq:ineq-E-dn-4} yields
\begin{align}
R&\geq  \E\bigg[{N \over n}R_{f_c}(\E[d_N(S^N,\Sh^N)|N])\bigg]\nonumber\\
&=  \E\bigg[{N \over n}R_{f_c}(\E[d_N(S^N,\Sh^N)|N])|\Ec_1\bigg]\P(\Ec_1)\nonumber\\
&\;\;\;\;+ \E\bigg[{N \over n}R_{f_c}(\E[d_N(S^N,\Sh^N)|N])|\Ec^c_1\bigg]\P(\Ec_1^c)\nonumber\\
&\geq \E\bigg[{N \over n}R_{f_c}(\E[d_N(S^N,\Sh^N)|N])|\Ec_1\bigg]\P(\Ec_1)\nonumber\\
&\geq (p-\e_n) \E[R_{f_c}(\E[d_N(S^N,\Sh^N)|N])|\Ec_1]\P(\Ec_1)\nonumber\\
&= (p-\e_n) \sum_{k=n(p-\e_n)}^{n(p+\e_n)}p_N(k)R_{f_c}(\E[d_k(S^k,\Sh^k)|N=k])\nonumber\\
&\geq {(p-\e_n ) \P(\Ec_1)}R_{f_c}(\E[d_N(S^N,\Sh^N)|\Ec_1]),
\end{align}
where the last step follows from Jensen's inequality. Now we already know that $\P(\Ec_1)$ is very close to one. Also, from \eqref{eq:ineq-E-dn-4},
\begin{align}
\E[d_n(X^n; \Xh^n)] +\d_n  &\geq \E[d_{N}(S^{N},\Sh^{N})]\nonumber\\
&\geq \E[d_{N}(S^{N},\Sh^{N})|\Ec_1]\P(\Ec_1).
\end{align}
Therefore,
\[
\E[d_{N}(S^{N},\Sh^{N})|\Ec_1] \leq {\E[d_n(X^n; \Xh^n)] +\d_n   \over \P(\Ec_1)},
\]
which again since  $\P(\Ec_1)$ is close to one yields the desired result.


\subsubsection{Achievability}

Consider the following encoder: to encode $X^n$, first describe $T_1,\ldots,T_{N}$ losslessly and then lossy encode $S_1,\ldots,S_N$. Assuming that the decoder already knows the blocklength $n$, to convey  $T_1,\ldots,T_{N}$ to the decoder, it suffices to code $T_1,\ldots,T_{N-1}$, because $T_{N}=n-\sum_{i=1}^{N-1}T_i$. To losslessly  describe $T_1,\ldots,T_{N-1}$, the encoder first  encodes $N$ using  the Elias gamma code \cite{Elias:75}. Since $N\in\{1,\ldots,n\}$, this requires at most $2\lfloor\log n\rfloor +1$ bits. Also, as showed earlier in \eqref{eq:P-Ec2-c-l}, $\P(|{1\over n}N-p|>\e_1)\leq 2\ex^{-2n\e_1^2}$.

Define $\e_n$ and $\Ec_1$ as in \eqref{eq:P-Ec1} in the converse part. Consider a family of lossless compression codes $(n_1,\encod_{n_1}^{(\T)},\decod_{n_1}^{(\T)})$ for the i.i.d.~source $T=\{T_i\}_{i=1}^{\infty}$, operating at rate $H(T)+\e^{(T)}_{n_1}$, $\e^{(T)}_{n_1}>0$, such that $\P(T^{n_1}\neq\hat{T}^{n_1})\to 0,$  as $n_1\to \infty$, where $\hat{T}^{n_1}=\decod_{n_1}^{(\T)}(\encod_{n_1}^{(\T)}(T^{n_1}) )$ and $\lim_{n_1\to\infty}\e^{(T)}_{n_1}=0$. By Shannon's lossless compression theorem, there exists such a family of codes satisfying these conditions  \cite{cover}. Note that $H(\T)=\sum_{m=1}^{\infty}(1-p)^{m-1}p\log((1-p)^{m-1}p)
=\frac{H(p)}{p}.$
After describing  $N$ to the decoder, if $\Ec_1$ holds,  the encoder employs the $(N-1,\encod_{N-1}^{(\T)},\decod_{N-1}^{(\T)})$ code to losslessly convey $T_1,\ldots,T_{N-1}$ to the decoder.  This requires
$(N-1)(H(T)+\e^{(T)}_{N-1}),$
bits. If $\Ec_1$ does not hold, it sends nothing else. Since the decoder knows $N$, it can determine whether $\Ec_1$ holds or not.
Define the event $\Ec_2$ as $\Ec_2=\{T^N=\hat{T}^N\}.$

The last encoding step is, conditioned on $\Ec_1$ holding, to describe $S_1,\ldots,S_N$.  Let $(n_2,\encod_{n_2}^{(S)},\decod_{n_2}^{(S)})$ be a family of lossy compression codes for the i.i.d.~source $S=\{S_i\}_{\indx=0}^{\infty}$ operating at rate $R_{f_c}(D)$, and expected distortion not exceeding  $D+\e^{(S)}_{n_2}$,  such that $\e^{(S)}_{n_2}>0$ and $\lim_{n_2\to\infty}\e^{(S)}_{n_2}=0$.

Overall the number of transmitted bits is either equal to $2\lfloor \log n\rfloor+1$ if $\Ec_1$ does not hold, or $2\lfloor \log n\rfloor+1+(N-1)(H(T)+\e^{(T)}_{N-1})+NR_{f_c}(D),$ otherwise. In the latter case, the rate of the code can be upper bounded as
\begin{align}
{2\lfloor \log n\rfloor+1\over n}+(p+{p\over n^{1/4}})(R_{f_c}(D)+H(T)+\e_*^{(T)})\nonumber\\
=pR_{f_c}(D)+H(p)+\e_X,
\end{align}
  where $\e_*^{(T)}=\max_{|n_1-p|\leq \e_n}\e^{(T)}_{n_1}$. Hence, $\e_X$ can be made arbitrarily small by choosing $n$ large enough.

After receiving all  encoded bits, if only $N$ is transmitted to the decoder, it reconstructs the all-zero sequence. Otherwise, it outputs $\Xh^n=\underbrace{\Sh_1,\ldots,\Sh_{1}}_{\hat{T}_1}, \underbrace{\Sh_2,\ldots,\Sh_{2}}_{\hat{T}_2}, \ldots, \underbrace{\Sh_N,\ldots,\Sh_{N}}_{\hat{T}_N}.$
Note that by  construction $\hat{N}=N$, with probability one.

By the tower property,
\begin{align*}
\E&[d_n(X^n,\Xh^n)]=\sum_{n_2=1}^{\infty}\E[d_n(X^n,\Xh^n)|N=n_2]\P(N=n_2)\nonumber\\
&\leq\sum_{n_2=n(p-\e_n)}^{n(p+\e_n)}\E[d_n(X^n,\Xh^n)|N=n_2]\P(N=n_2)\nonumber\\
&\;\;\;\;\;\;\;\;\;+ d_{\max}\P(\Ec_1^c)\nonumber\\
&\leq  \;\sum_{n_2=n(p-\e_n)}^{n(p+\e_n)}\E[d_n(X^n,\Xh^n)|N=n_2,\Ec_2]\P(N=n_2,\Ec_2)\nonumber\\
&\;\;\;\;\;\;\;\;\; + d_{\max}\P(\Ec_1^c\cup\Ec_2^c).
\end{align*}
Conditioned on $\Ec_1\cap\Ec_2$, the distortion between the source block $X^n$, and its reconstruction $\Xh_n$ can be written as
$d_n(X^n,\Xh^n)={1\over n}\sum_{i=1}^nd(X_i,\Xh_i)
={1\over n}\sum_{k=1}^{N}T_k d(S_{k},\Sh_k).$
Therefore,
\begin{align*}
&\E[d_n(X^n,\Xh^n)] \nonumber\\
&\;\;\;\;\leq  {1\over n}\sum_{n_2=n(p-\e_n)}^{n(p+\e_n)}\bigg(\E[\sum_{k=1}^{n_2}T_k d(S_{k},\Sh_k)|N=n_2]\\
&\;\;\;\;\;\;\;\;\;\;\;\;\;\;\;\;\;\;\;\;\;\;\;\;\P(N=n_2,\Ec_2)\bigg)+ d_{\max}\P(\Ec_1^c\cup\Ec_2^c).
\end{align*}
Conditioned on $N$, $T_k$ and $d(S_{k},\Sh_k)$ are independent, and  $T_1,\ldots,T_{N-1}$~are i.i.d. Also, there exists $\tilde{T}_N$ such that $T_N\leq \tilde{T}_N$, and $T_1,\ldots,T_{N-1},\tilde{T}_N$ are all i.i.d. Therefore,
\begin{align}
&\E[d_n(X^n,\Xh^n)] \nonumber\\
&\leq  {1\over n}\sum_{n_2=n(p-\e_n)}^{n(p+\e_n)}\bigg((n_2-1)\E[d_{n_2-1}(S^{n_2},\Sh^{n_2})]\E[T_1|N=n_2]\nonumber\\
&\;\;\;\;+\E[d(S_{n_2},\Sh_{n_2})]\E[\tilde{T}_N|N=n_2]\bigg)\P(N=n_2,\Ec_2)\nonumber\\
& \;\;\;\;+ d_{\max}\P(\Ec_1^c\cup\Ec_2^c)\nonumber\\
 &\leq  {1\over n}\sum_{n_2=n(p-\e_n)}^{n(p+\e_n)}\bigg((n_2-1)(D+\e^{(S)}_{n_2-1})\E[T_1|N=n_2]\nonumber\\
 &\;\;\;\;+\E[d(S_{n_2},\Sh_{n_2})]\E[\tilde{T}_N|N=n_2]\bigg) \P(N=n_2,\Ec_2)\nonumber\\
& \;\;\;\;+ d_{\max}\P(\Ec_1^c\cup\Ec_2^c).\label{eq:E-d-Xn-Xhn}
\end{align}
On the other hand, since $T_1,\ldots,T_{N-1}$ are i.i.d., we have $\E[T_1|N=n_2]=\ldots=\E[T_{N-1}|N=n_2].$
But $\sum_{i=1}^{N-1}T_i\leq n$. Therefore, $\E[\sum_{i=1}^{N-1}T_i|N=n_2]=(n_2-1)\E[T_1|N=n_2]  \leq n$,
and
\begin{align}
\E[T_1|N=n_2] \leq {n\over n_2-1}.\label{eq:cond-exp-T1}
\end{align}
Also,
\begin{align}
\E[\tilde{T}_N|N=n_2]\P(N=n_2,\Ec_2)
&\leq \E[\tilde{T}_N|N=n_2]\P(N=n_2)\nonumber\\
&\leq \E[\tilde{T}_N]
={1\over p}.\label{eq:ub-cond-exp-TtN}
\end{align}

Hence, combining \eqref{eq:E-d-Xn-Xhn}, \eqref{eq:cond-exp-T1}  and \eqref{eq:ub-cond-exp-TtN} yields
$\E[d_n(X^n,\Xh^n)]
\leq \max_{n_2=n(p-\e_n)}^{n(p+\e_n)} 
(D+\e^{(S)}_{n_2-1})+d_{\max}({2\e_n\over p}+\P(\Ec_1^c)+\P(\Ec_2^c\cap\Ec_1))
\leq D+\d_n,$ where $\d_n\to 0$, as $n$ grows to infinity.


\subsection{Proof of Corollary \ref{CSP-RDD-Corollary}}
Since $\liminf_{n\to\infty}{m_n\over n}>2\overline{\dim}_R({\Xbbf})$, there exists $\eta>1$, such that $\liminf_{n\to\infty}{m_n\over n}>2\eta\overline{\dim}_R({\Xbbf})$. Therefore, there exists $n_{\eta}>0$, such that for all $n>n_{\eta}$, ${m_n\over n}\geq 2\eta\overline{\dim}_R({\Xbbf}).$
On the other hand, for any $\gamma>0$, there exists $D_{\gamma}>0$, such that for all $D\leq D_{\gamma}$,
\[
2{R(\Xbbf,D)\over \log{1\over D}} \leq \overline{\dim}_R({\Xbbf})+\gamma.
\]
Hence, there exists $\eta'\in(1,\eta)$, such that choosing $\gamma$ small enough, we have
\[
{m\over n}\geq {4\eta'R(\Xbbf,D)\over \log{1\over D}},
\]
for all $n>n_{\eta}$ and $D<D_{\gamma}$.

Since $\lim_{D\to 0}(2D^{\frac{1}{2}(1-\frac{1+\d}{\eta})} (\sqrt{\frac{\log{1\over D}}{4\eta R}}+2)+\sqrt{D})=0$, there exists $D_{\Delta}<D_{\gamma}$, such that
\[
2D_{\Delta}^{\frac{1}{2}(1-\frac{1+\d}{\eta})} (\sqrt{\frac{\log{1\over D_{\Delta}}}{4\eta R}}+2)+\sqrt{D_{\Delta}}<\Delta.
\] Considering a family of lossy compression codes achieving $(R(D_{\Delta}),D_{\Delta})$ and the CSP algorithm that employs this family of codes, Theorem \ref{NoiselessThm} proves the desired result.


\subsection{Proof of Theorem \ref{thm:universal-CS}}

  Let $\Xh^n=g_n(f_n(X^n)$. Since $(n,f_n,g_n)$ denotes a family of point-wise universal lossy compression codes operating at distortion level $D$, for any $\e>0$,  for all $n$ large enough,
  \[
  \P({1\over n}|f_n(X^n)| > R(\Xbbf,D)+\e)<{\e\over 2},
  \]
  and
  \[
  \P({1\over \sqrt{n}}\|X^n-\Xh^n\|_2> \sqrt{D+\e})\leq {\e\over 2}.
  \]
  Let $\Ec_1\triangleq \{{1\over n}|f_n(X^n)| \leq R(\Xbbf,D)+\e\}\cup\{ {1\over \sqrt{n}}\|X^n-\Xh^n\|_2\leq  \sqrt{D+\e} \}$. Then, $\P(\Ec_1)\leq \e$, and conditioned on $\Ec_1^c$, $f_n(X^n)$ satisfies the condition of the UCSP optimization. Therefore, conditioned on $\Ec_1^c$,
  \begin{align}\label{eq:UCSP-perform-1}
     \|Y^m-A\Xt^n\|_2&\leq \|Y^m-Ag_n(f_n(X^n))\|_2 \nonumber\\
    &\leq\sigma_{\max}(A)\|X^n-\Xh^n\|_2\nonumber\\
    &\leq \sigma_{\max}(A)\sqrt{n(D+\e)}.
  \end{align}
The rest of the proof is very similar to the proof of Theorem \ref{NoiselessThm}. The only difference is that in this case, instead of the size of the codebook, we need to bound the size of the set $\Bc=\{b: b\in\{0,1\}^*, |b|\leq n(R(\Xbbf,D)+\e) \}$. But $|\Bc|=\sum_{i=1}^{n(R(\Xbbf,D)+\e)}2^i=2^{n(R(\Xbbf,D)+\e)+1}-1$. The rest of the proof follows similar to the proof of Theorem \ref{NoiselessThm}.

\section{Conclusions}\label{sec:conc}

In this paper, we have studied the application of  rate-distortion codes in building compressed sensing  recovery algorithms for stochastic processes.  Establishing such connections between rate-distortion coding and compressed sensing potentially enables application of well-studied state-of-the-art lossy compression codes in building highly efficient compressed sensing  recovery algorithms.

We have focused on the CSP algorithm proposed  in \cite{JalaliM:13-isit} as a compression-based  compressed sensing recovery algorithm  for deterministic signals. For the CSP algorithm that employs a rate-distortion code with a certain rate $R$ and distortion $D$, we have derived an upper bound on the normalized distance between  the original vector and its reconstruction that holds with high probability.  

To analyze the asymptotic performance of the CSP algorithm when the distortion $D$ approaches zero, we have defined the RDD of stationary  processes, as a generalization of the RDD of stochastic vectors introduced in \cite{KawabataD:94}. We have proved that under some mild conditions the RDD of a stationary process is equal to its ID introduced in \cite{JalaliP:14-arxiv}. Our results have demonstrated that in the limit, as $D\to0$, for sufficiently large blocklengths $n$, CSP renders a reliable reconstruction of the source vector with almost zero-distortion, with slightly more than $n$ times the RDD of the source. This is equal to the fundamental limit of compressed sensing in memoryless stationary sources shown in \cite{WuV:10}, which proves the optimality of CSP at least in cases where the lower bounds are known.

There are two major directions that remain open for future study: the first is to design algorithms to solve the minimization problem in CSP with manageable complexity; and the second is to find the fundamental limits on compressed sensing  for general stochastic stationary sources, which would enable us to see whether CSP is always optimal, and if not how far from optimal it is.

\appendices

\section{Rate of approach of $R^{(m)}(\Xbbf,D)$ to $R(\Xbbf,D)$}\label{app:WZ-genaral}

Consider a pair of random variables $(X,\Xh)\in\Xc\times \mathcal{\Xh}$,  with alphabet sets $\Xc,\mathcal{\Xh}\subset\mathds{R}$, distributed as  $p_{X,\Xh}$, where $p_{X,\Xh}$ denotes a general measure.   For sets $\Ec\in\Xc$ and $\Fc\in\mathcal{\Xh}$, the probability of the set $\Ec\times \Fc$ under $p_{X,\Xh}$ is computed as
\[
\P_{X,\Xh}(\Ec\times\Fc)=\int_{u\in\Ec\times\Fc}p_{X,\Xh}(du).
\]
The marginal distributions under $X$ and $\Xh$ are defined as
\[
\P_{X}(\Ec)=\int_{u\in\Ec\times\mathcal{\Xh}}p_{X,\Xh}(du),
\]
and
\[
\P_{\Xh}(\Fc)=\int_{u\in\Xc\times\Fc}p_{X,\Xh}(du),
\]
respectively.
Let $\Pc$ denote a partition of $\Xc\times\Fc$ into finitely many rectangles, $\{\Ec_i,\Fc_j\}_{i,j}$.
Dobrushin \cite{pinsker1964information,KawabataD:94} established that for  random variables $(X,\Xh)$ with a general distribution,  the mutual information can be generalized as
\[
I(X ;\Xh)=\sup_{\Pc}\sum_{i,j}\P_{X,\Xh}(\Ec_i\times\Fc_j)\log {\P_{X,\Xh}(\Ec_i\times\Fc_j)\over \P_{X}(\Ec_i)\P_{\Xh}(\Fc_j)}.
\]

 Wyner and Ziv in \cite{WynerZ:71} proved that with sources with either discrete or absolutely continuous distributions we  have
 \[
 \sum_{k=1}^N I(X_k;\Xh_k)-N\Delta_N-I(X^N;\Xh^N)\leq 0,
 \]
 where
 \begin{align}
 \Delta_N={1\over N}\sup_{\Pc} \sum_{i_1,\ldots,i_N}\P_{X^N}(\prod_{k=1}^N\Ec_{i_k})\log{\P_{X^N}(\prod_{k=1}^N\Ec_{i_k}) \over \prod_{k=1}^N \P_{X_k}(\Ec_{i_k})}. \label{eq:Delta_N}
 \end{align}
 In the following, we prove that  this inequality also holds for sources with general distributions. Given $(X_k,\Xh_k)$ and $\e>0$, let $\Pc_k=\{\Ec_{i_k}\times\Fc_{j_k}\}_{i_k,j_k}$ denote the partitioning of $\Xc\times\mathcal{\Xh}$ that  ensures
\begin{align*}
&I(X_k;\Xh_k)\\
&\;\;\;-\sum_{i_k,j_k}\P_{X_k,\Xh_k}(\Ec_{i_k}\times\Fc_{j_k})\log {\P_{X_k,\Xh_k}(\Ec_{i_k}\times\Fc_{j_k})\over \P_{X_k}(\Ec_i)\P_{\Xh_k}(\Fc_{j_k})}\leq {\e\over N}.
\end{align*}
Since $I(X_k;\Xh_k)$ is defined as the supremum of the objective function over all partitions, such a partition always exists. Combining these partitions yields a natural partitioning of $\Xc^N\times\mathcal{\Xh}^N$. Since to evaluate  $I(X^N;\Xh^N)$ and $\Delta_N$ involves taking suprema of the corresponding objective functions, we have
\begin{align}
&\sum_{k=1}^NI(X_k;\Xh_k)-N\Delta_N-I(X^N;\Xh^N)\nonumber\\
&\leq\sum_{k=1}^N\Big(\sum_{i_k,j_k}\P_{X_k,\Xh_k}(\Ec_{i_k}\times\Fc_{j_k})\nonumber\\
&\hspace{1.7cm}\log{\P_{X_k,\Xh_k}(\Ec_{i_k}\times\Fc_{j_k}) \over \P_{X_k}(\Ec_{i_k})\P_{\Xh_k}(\Fc_{j_k})}+{\e\over N}\Big)\nonumber\\
&\;\;-\sum_{i_1,\ldots,i_N}\P_{X^N}(\prod_{k=1}^N\Ec_{i_k})\log{\P_{X^N}(\prod_{k=1}^N\Ec_{i_k}) \over \prod_{k=1}^N \P_{X_k}(\Ec_{i_k})}\nonumber\\
&\;\;-\sum_{\substack{i_1,\ldots,i_k\\ j_1,\ldots,j_k}}\P_{X^N,\Xh^N}(\prod_{k=1}^N\Ec_{i_k}\times \prod_{k=1}^N\Fc_{j_k})\nonumber\\
&\hspace{1.7cm}\log{\P_{X^N,\Xh^N}(\prod_{k=1}^N\Ec_{i_k}\times \prod_{k=1}^N\Fc_{j_k})\over \P_{X^N}(\prod_{k=1}^N\Ec_{i_k}) \P_{\Xh^N}(\prod_{k=1}^N\Fc_{j_k})}\nonumber\\
&=\e+\sum_{i^N,j^N}\P_{X^N,\Xh^N}(\prod_{k=1}^N\Ec_{i_k}\times \prod_{k=1}^N\Fc_{j_k})\nonumber\\
&\hspace{1.7cm}\log\Big[\prod_{k=1}^N{\P_{X_k,\Xh_k}(\Ec_{i_k}\times \Fc_{j_k})\over \P_{X_k}(\Ec_{i_k})\times \P_{\Xh_k}( \Fc_{j_k})}\nonumber\\
&\hspace{2.2cm}\times {\prod_{k=1}^N\P_{X_k}(\Ec_{i_k})\over \P_{X^N}(\prod_{k=1}^N\Ec_{i_k})}\nonumber\\
&\hspace{2.2cm}\times {{\P_{X^N}(\prod_{k=1}^N\Ec_{i_k})\P_{\Xh^N}}(\prod_{k=1}^N\Fc_{j_k})\over\P_{X^N,\Xh^N}(\prod_{k=1}^N\Ec_{i_k}\times\prod_{k=1}^N\Fc_{j_k})}\Big].
\end{align}
Canceling the common terms, and rearranging the terms, it follows that
\begin{align}
&\sum_{k=1}^NI(X_k;\Xh_k)-N\Delta_N-I(X^N;\Xh^N)\nonumber\\
&\leq\e+\sum_{i^N,j^N}\P_{X^N,\Xh^N}(\prod_{k=1}^N\Ec_{i_k}\times \prod_{k=1}^N\Fc_{j_k})\nonumber\\
&\;\;\log\Big({ \prod_{k=1}^N \P_{X_k|\Xh_k}(\Ec_{i_k}| \Fc_{j_k})\over \P_{X^N|\Xh^N}(\prod_{k=1}^N \Ec_{i_k} |\prod_{k=1}^N\Fc_{j_k})}\Big).\label{eq:app-1}
\end{align}
Since $\log x\leq x-1$, the right hand side of \eqref{eq:app-1} can further be upper-bounded as
\begin{align}
&\sum_{k=1}^NI(X_k;\Xh_k)-N\Delta_N-I(X^N;\Xh^N)\nonumber\\
&\leq \e+\sum_{i^N,j^N}\P_{X^N,\Xh^N}(\prod_{k=1}^N\Ec_{i_k}\times \prod_{k=1}^N\Fc_{j_k})\nonumber\\
&\hspace{1.5cm}\Big({ \prod_{k=1}^N \P_{X_k|\Xh_k}(\Ec_{i_k}| \Fc_{j_k})\over \P_{X^N|\Xh^N}(\prod_{k=1}^N \Ec_{i_k} |\prod_{k=1}^N\Fc_{j_k})}-1\Big)\nonumber\\
&= \e+\sum_{i^N,j^N}{\P_{\Xh^N}(\prod_{k=1}^N\Fc_{j_k}) \prod_{k=1}^N \P_{X_k|\Xh_k}(\Ec_{i_k}| \Fc_{j_k})}\nonumber\\
&\hspace{1.5cm}-\sum_{i^N,j^N}\P_{X^N,\Xh^N}(\prod_{k=1}^N\Ec_{i_k}\times \prod_{k=1}^N\Fc_{j_k})\nonumber\\
&=\e+1-1=\e.
\end{align}
Since $\e>0$ was selected arbitrarily, this proves the desired inequality, \ie $\sum_{k=1}^NI(X_k;\Xh_k)-N\Delta_N-I(X^N;\Xh^N)\leq 0.$ This result is analogous to Lemma 2 in  \cite{WynerZ:71}, but holds for sources with general distributions. After this generalization, the next steps required for proving the lower bound established in Section III.B of \cite{WynerZ:71} also hold in this case, with no change. Therefore,
\begin{align*}
R^{(N)}(\Xbbf,D)\geq R^{(1)}(\Xbbf,D)-\Delta_N.
\end{align*}

Using the fact that memory decreases the rate of a source \cite{WynerZ:71} we get an upper bound on $R^{(N)}(\Xbbf,D)$:
\begin{align}
R^{(1)}(\Xbbf,D)-\Delta_N\leq R^{(N)}(\Xbbf,D)\leq R^{(1)}(\Xbbf,D). \label{eq:Wyner-Ziv}
\end{align}

To prove the inequality \eqref{eq:bd-Wyner-Ziv}, we first need to review some properties of $\Delta_N$. 
Following the definition in \eqref{eq:Delta_N}, it can be shown that $\Delta_N$ can be represented in terms of mutual information as follows  \cite{WynerZ:71}: 
 \begin{align}
 \Delta_N={1 \over N}\sum_{i=2}^N I(X_k;X_1^{k-1}). \label{eq:Delta_N_a}
 \end{align}

Note that with this alternative  representation it is very easy to see that $\Delta_N$ is increasing in $N$ \cite{WynerZ:71}. Putting this together with \eqref{eq:Wyner-Ziv} we get
\begin{align}
|R^{(N)}(\Xbbf,D) - R(\Xbbf,D)|\leq\Delta_N\leq\Delta_\infty, \label{eq:Wyner-Ziv-2}
\end{align}
where
\begin{align}
\Delta_\infty=\lim_{N\to\infty}\Delta_N=I(X_1;X_{-\infty}^{0}), \label{eq:Delta_infty}
\end{align}
 follows directly from \eqref{eq:Delta_N_a}. Note that $R(\Xbbf,D)$ is the rate-distortion function of the stationary process $\Xbbf$.

Let $\mathbf{Y}$ be the supersource whose outputs are successive blocks of $m$ outputs of the source $\Xbbf$. Applying \eqref{eq:Wyner-Ziv-2} to $\mathbf{Y}$ with $N=1$ we have
\begin{align*}
|R^{(1)}(\mathbf{Y},D) - R(\mathbf{Y},D)|\leq\Delta_\infty. 
\end{align*}
Since $\mathbf{Y}$ is defined as a supersource of successive blocks of  length $m$ of the source $\Xbbf$, it is easy to see that $R^{(1)}(\mathbf{Y},D)=mR^{(m)}(\Xbbf,D)$ and $R(\mathbf{Y},D)=mR(\Xbbf,D)$, and therefore,
\begin{align*}
|R^{(m)}(\Xbbf,D) - R(\Xbbf,D)|&\leq{1 \over m}\Delta_\infty\\
&={1 \over m}I(X_1;X_{-\infty}^{0}), 
\end{align*}
where the last line follows from \eqref{eq:Delta_infty}. Hence, the proof is complete and \eqref{eq:bd-Wyner-Ziv} holds for general stationary sources.

\section*{Acknowledgments}
This research was supported in part by the U.S. National Science Foundation grant CCF-1420575.

\bibliographystyle{unsrt}
\bibliography{../myrefs}

\setcounter{equation}{0}
\renewcommand{\theequation}{\thesection.\arabic{equation}}

\end{document}